\RequirePackage{fix-cm}
\documentclass[natbib]{svjour3arxiv}                     
\smartqed  
\usepackage{mathptmx}      
\usepackage[latin1]{inputenc}
\usepackage[english]{babel}
\usepackage{amsmath,amssymb}
\usepackage{mathtools}
\usepackage{graphicx}
\usepackage{mdframed,subfig}
\usepackage{booktabs,caption}
\usepackage{multicol,multirow}
\usepackage{makeidx}  
\usepackage{tikz}
\usetikzlibrary{positioning}
\setcounter{MaxMatrixCols}{12}

\def\fiber{\mathcal{F}_{N,t}}
\DeclareMathOperator\Poi{Poisson}
\DeclareMathOperator\Bin{Binomial}

\DeclareMathOperator\partition{part}
\DeclareMathOperator\variance{Var}
\DeclareMathOperator\bias{E}
\DeclareMathOperator\mad{MAD}

\begin{document}

\title{Markov Chain Monte Carlo sampling for conditional  tests
}
\subtitle{A link between permutation tests and algebraic statistics}


\author{Francesca Romana Crucinio         \and
        Roberto Fontana
}


\institute{Francesca Romana Crucinio \at
              Politecnico di Torino, Dipartimento di Scienze Matematiche \\
              \email{francesca.crucinio@gmail.com}
           \and
           Roberto Fontana \at
              Politecnico di Torino, Dipartimento di Scienze Matematiche \\
              Tel.: +39 011 0907504\\
              \email{roberto.fontana@polito.it}
}

\date{ }

\maketitle
\begin{abstract}
We consider conditional tests for non-negative discrete exponential families. We develop two Markov Chain Monte Carlo (MCMC) algorithms which allow us to sample from the conditional space and to perform approximated tests.
The first algorithm is based on the MCMC sampling described by \cite{sturmfels}. The second MCMC sampling consists in a more efficient algorithm which exploits the optimal partition of the conditional space into orbits of permutations. We thus establish a link between standard permutation and algebraic-statistics-based sampling. 
Through a simulation study we compare the exact cumulative distribution function (cdf) with the approximated cdfs which are obtained with the two MCMC samplings and the standard permutation sampling. 
We conclude that the MCMC sampling which exploits the partition of the conditional space into orbits of permutations gives an estimated cdf, under $H_0$, which is more reliable and converges to the exact cdf with the least steps. This sampling technique can also be used to build an approximation of the exact cdf when its exact computation is computationally infeasible.
\keywords{Algebraic statistics \and Conditional test \and Discrete Exponential Family \and Markov Chain Monte Carlo \and Permutation Test}
\end{abstract}

\section{Introduction}
\label{sec:intro}
The problem of comparing two measures of location for two random samples is one of the classical problems which arise in statistics. 
We consider two independent samples, $\mathbf{Y}_1^{(n_1)}=(Y_1,\ldots,Y_{n_1})$ of size $n_1$ from $f_1$ and $\mathbf{Y}_2^{(n_2)}=(Y_{n_1+1},\ldots,Y_{n_1+n_2})$ of size $n_2$ from $f_2$, and we assume that $f_1$ and $f_2$ belong to the same non-negative discrete exponential family
\[f(y\vert \mu)=G(\mu)H(y)\exp\lbrace y\cdot\psi(\mu)\rbrace,\]
where $G$ only depends on the parameter $\mu$, $H$ is the underlying measure which only depends on data $y$ and $\psi$ is the natural parameter.

This assumption is not too restrictive. Most widely-used discrete distributions belong to the exponential family. Some examples are reported in Table \ref{tab:expfam}.

\begin{table}[h]
\centering
\caption{Examples of non-negative distributions of the discrete exponential family}
\label{tab:expfam}  
\begin{tabular}{l|llll}
\hline\noalign{\smallskip}
\textbf{Distribution} & $G(\mu)$&$H(y)$&$\psi(\mu)$&$\mathbb{E}[Y]$\\
\noalign{\smallskip}\hline\noalign{\smallskip}
Poisson&$e^{-\mu}$&$\frac{1}{y!}$&$\log(\mu)$&$\mu$\\
Geometric&$\mu$&$1$&$\log(1-\mu)$&$\frac{1}{\mu}$\\
Binomial ($k$ trials)&$k\log(1-\mu)$&$\binom{k}{y}$&$\log\left(\frac{\mu}{1-\mu}\right)$&$k\mu$\\
\hline\noalign{\smallskip}
\end{tabular}
\end{table}

We are interested in checking if the two distributions are equal, that is if the pooled sample $\mathbf{Y}=(\mathbf{Y}_1^{(n_{1})},\mathbf{Y}_2^{(n_{2})})$ is formed by $N=n_1+n_2$ observations coming from the same distribution $f=f_1=f_2$.
Thus we perform the hypothesis test
\begin{equation}
\label{eq:H0}
H_0: \mu_1=\mu_2\qquad \text{against} \qquad H_1: \mu_1\gtrless\mu_2.
\end{equation}

Several testing procedures are available, both parametric and non-parametric. We consider \emph{conditional} tests for discrete exponential families. To this aim we consider the joint distribution of sample $\mathbf{Y}$, which (with a slight abuse of notation) we denote by $f$
\begin{align*}
&f(y_1,\dots,y_{n_1},y_{n_1+1},\dots,y_{n_1+n_2})=\\
&=\prod_{i=1}^{n_1} f_1(y_i)\prod_{i=n_1+1}^{n_2+n_1} f_2(y_i)=\\
&=G(\mu_1)^{n_1}G(\mu_2)^{n_2}\cdot\prod_{i=1}^{n_1}H(y_i)\prod_{i=n_1+1}^{n_2+n_1}H(y_i)\cdot\\
&\exp\left\lbrace \left(\sum_{i=1}^{n_1} y_i\right)\left(\psi(\mu_1)- \psi(\mu_2)\right)+\left(\sum_{i=1}^{n_1+n_2}y_i\right)\psi(\mu_2)\right\rbrace.
\end{align*}

According to \cite{lehmann} there exists a uniformly most powerful unbiased (UMPU) test of
\begin{equation}
H_0:\psi(\mu_1)-\psi(\mu_2)=0\quad \text{against}\quad \psi(\mu_1)-\psi(\mu_2)\gtrless0
\label{eq:H0_umpu}
\end{equation}
performed conditionally on $T=U+V$, where $U=\sum_{i=1}^{n_1} Y_i$ and $V=\sum_{i=n_1+1}^{n_1+n_2} Y_i$.
We observe that the hypotheses in equation \eqref{eq:H0_umpu} are equivalent to the ones in \eqref{eq:H0}.

The conditioning on $T=\sum_{i=1}^{n_1+n_2} Y_i$ is justified by the fact that if we assume the standard one-way ANOVA model for the means of the two distributions, which according to \cite{glm} is $\psi(\mu_i)=\beta_0+\beta_1 x_i$ with $x_i=1$ if $1\leq i\leq n_1$ and $x_i=-1$ if $n_1+1\leq i\leq n_1+n_2$, the statistic $T=\sum_{i=1}^{n_1+n_2}Y_{i}$ is sufficient for the population constant $\beta_0$, which is the nuisance parameter of the test.
Hence we will consider tests which are performed considering only samples $\mathbf{Y}$ such that the sum $T=\sum_{i=1}^{n_1+n_2}Y_{i}$ of their elements is equal to the sum $t=\sum_{i=1}^{n_1+n_2}y_{i,obs}$ of the elements of the observed sample $\mathbf{y}_{obs}$.

Given that the rejection/non-rejection of the null hypothesis \eqref{eq:H0} depends on the p-value corresponding to the observed sample $\mathbf{y}_{obs}$, which is a function of the cumulative distribution of the test statistic under $H_0$, we will focus on efficient algorithms to compute a good estimate of the cumulative distribution function under $H_0$.

In Section \ref{sec:umpu} we briefly describe the exact uniformly most powerful unbiased conditional test presented in \cite{lehmann}. We will refer to this test as the UMPU test. In Section \ref{sec:mcmc} we describe a Markov Chain Monte Carlo algorithm which exploits Markov basis \citep{sturmfels} and the Metropolis-Hastings theorem \citep{casella} to approximate the distribution of the test statistic $U$ of the UMPU test. This procedure can be used when a computational-friendly form of the conditional distribution is not known.
In Section \ref{sec:permutation1} we prove that the set from which we sample with the MCMC procedure can be partitioned into orbits of permutations and that this partition is somehow \emph{optimal}, because we can exploit the inferentially equivalence of permutations to devise a 2-step sampling method which is more efficient in terms of convergence speed and reliability than the one presented in Section \ref{sec:mcmc}. This 2-step sampling consists in performing a MCMC sampling over the set of orbits and then a standard Monte Carlo sampling of the elements of the sampled orbit through which we build the usual Monte Carlo permutation cdf.
We will refer to this sampling method as the MCMC over the orbits.
In Section \ref{sec:compare} we compare the properties of the two estimators of the exact conditional cumulative distribution under $H_0$, the one described in Section \ref{sec:mcmc} and the Monte Carlo permutation cdf described in Section \ref{sec:permutation1}. 
In Section \ref{sec:permutation2} we briefly analyse the link between the MCMC sampling over the orbits and the standard permutation test. 
There follows a comparison through simulation study of the exact cumulative distribution function with the cdfs corresponding to the three sampling procedures (the two MCMCs and the standard permutation sampling) and a study of the convergence speed of the two MCMC routines (Section \ref{sec:sim}).
Finally, in Section \ref{sec:conclusion} we conclude that the 2-step sampling presented in Section \ref{sec:permutation1} is an efficient way to perform uniformly most powerful unbiased tests for any non-negative discrete exponential family. In addition we discuss some further extensions of this sampling algorithm to N-way ANOVA models.

\section{Exact Conditional Uniformly Most Powerful and Unbiased Test}
\label{sec:umpu}

Following \cite{lehmann} we are interested in the distribution of $U=\sum_{i=1}^{n_1} Y_i$ given $T=U+V$, where $V=\sum_{i=n_1+1}^{n_1+n_2} Y_i$.

When dealing with discrete exponential families the conditional distribution $U\vert T$ can be easily found. It is well known that if $U$ and $V$ are independent then
\begin{equation}
\label{eq:condition}
\mathbb{P}(U=u\vert T=t)=\frac{\mathbb{P}(U=u)\mathbb{P}(V=t-u)}{\mathbb{P}( T=t)}.
\end{equation}
Then, given that $U=\sum_{i=1}^{n_1} Y_i$ and $V=\sum_{i=n_1}^{n_1+n_2} Y_i$ are sum of i.i.d. random variables belonging to the same exponential family, their distribution is known and equation \eqref{eq:condition} becomes
\begin{multline}
\frac{G(\mu_1)^{n_1}G(\mu_2)^{n_2} e^{\left\lbrace u(\psi(\mu_1)-\psi(\mu_2))+t\psi(\mu_2)\right\rbrace}}
{G(\mu_1)^{n_1}G(\mu_2)^{n_2} e^{\left\lbrace t\psi(\mu_2)\right\rbrace}}\cdot\\
\frac{\underset{\mathbf{y}_1^{(n_1)}\in\mathcal{F}_{n_1,u}}\sum \underset{i=1}{\overset{n_1}\prod }H(y_i)\cdot\underset{\mathbf{y}_2^{(n_2)}\in\mathcal{F}_{n_2,t-u}}\sum\underset{i=n_1+1}{\overset{n_1+n_2}\prod }H(y_i)}{\overset{t}{\underset{u=0}\sum} e^{\left\lbrace u(\psi(\mu_1)-\psi(\mu_2))\right\rbrace} \underset{\mathbf{y}_1^{(n_1)}\in\mathcal{F}_{n_1,u}}\sum\underset{i=1}{\overset{n_1}\prod }H(y_i)\cdot\underset{\mathbf{y}_2^{(n_2)}\in\mathcal{F}_{n_2,t-u}}\sum\underset{i=n_1+1}{\overset{n_1+n_2}\prod }H(y_i)},
\end{multline}
where we denote by $\mathcal{F}_{n,x}$ the set of positive integer vectors of length $n$ with sum of entries equal to $x$.

Under $H_0:\psi(\mu_1)=\psi(\mu_2)$ and with some straightforward simplifications this reduces to
\begin{gather}
\label{eq:umpu}
\frac{\underset{\mathbf{y}_1^{(n_1)}\in\mathcal{F}_{n_1,u}}\sum\underset{i=1}{\overset{n_1}\prod }H(y_i)\cdot \underset{\mathbf{y}_2^{(n_2)}\in\mathcal{F}_{n_2,t-u}}\sum\underset{i=n_1+1}{\overset{n_1+n_2}\prod }H(y_i)}{\overset{t}{\underset{u=0}\sum}\underset{\mathbf{y}_1^{(n_1)}\in\mathcal{F}_{n_1,u}} \sum\underset{i=1}{\overset{n_1}\prod }H(y_i)\cdot\underset{\mathbf{y}_2^{(n_2)}\in\mathcal{F}_{n_2,t-u}}\sum\underset{i=n_1+1}{\overset{n_1+n_2}\prod }H(y_i)}.
\end{gather}

Now we can either find the critical values for any given risk of type I error or, alternatively, compute the p-value which corresponds to the observed value $u_{obs}$ of $U$, through the conditional cumulative distribution function $\mathbb{P}(U\leq u\vert T=t)$ under $H_0$. 

As an example consider two samples $Y_1,\ldots, Y_{n_1}$ distributed according to Poisson($\mu_1$) and $Y_{n_1+1},\ldots, Y_{n_1+n_2}$ distributed according to Poisson($\mu_2$). Under $H_0:\mu_1=\mu_2$, the distribution \eqref{eq:umpu} is a binomial distribution with probability of success $\theta_0=n_1/(n_1+n_2)$ and $t$ trials.

However computing the distribution \eqref{eq:umpu} is sometimes extremely difficult and time-consuming. In this case we can rely on the MCMC procedure described in Section \ref{sec:mcmc}.

Two examples of distributions for which a computational-friendly form of distribution \eqref{eq:umpu} is not easy to find are the log-series distribution (pag. 297 of \cite{kemp})
\[f(y\vert\mu)=-\frac{\mu^y}{y\log(1-\mu)}, \quad y>0,\ 0 < \mu <1,\]
and the lost-games distribution with fixed gambler's initial capital $j \geq 0$, fixed maximum achievable capital $a>0$ and $0.5<\mu<1$ (pag. 445 of \cite{kemp})
\[f(y\vert\mu)=\binom{2y+a-2j}{y-j}\frac{a}{2y+a-2j}\mu^{a+y-j}(1-\mu)^{y-j}.\]

\section{The Markov Chain Monte Carlo Test}
\label{sec:mcmc}

As a consequence of the conditioning on $T=\sum_{i=1}^{n_1+n_2} Y_i$, to perform an exact hypothesis test, the conditional space to be inspected under $H_0$, is the \emph{fiber} of vectors of size $N=n_1+n_2$ and with entries which add up to $t$
\begin{equation}
\label{eq:fiber}
\mathcal{F}_{N,t}=\lbrace(Y_1,\dots,Y_{n_1+n_2})\in \mathbb{N}^{n_1+n_2}: \sum_{i=1}^N Y_i= \mathbf{1}_N^T \mathbf{Y}=t\rbrace
\end{equation}
where $\mathbf{1}_N=\underbrace{(1, \ldots, 1)}_{N \text{ times}}$.

We consider the distribution of $U=\sum_{i=1}^{n_1} Y_i$ under $H_0$ over the fiber $\fiber$. We denote such conditional cumulative distribution function by $F_U(u\vert \fiber)$. We get
\begin{equation}
\label{eq:dist_fiber}
F_U(u\vert \fiber)=\mathbb{P}(U(\mathbf{y})\leq u\vert \mathbf{y}\in\fiber)=\sum_{\mathbf{y}\in\fiber}\mathbb{I}_{(U(\mathbf{y})\leq u)}(\mathbf{y})p(\mathbf{y}),
\end{equation}
where $U(\mathbf{y})=\sum_{i=1}^{n_1}y_i$ and $\mathbb{I}_{(U(\mathbf{y})\leq u)}(\mathbf{y})$ is $1$ if $U(\mathbf{y})\leq u$ and $0$ otherwise.

We focus on how well we can compute the conditional cumulative distribution function $F_U(u\vert \fiber)$ because p-values of the conditional tests are computed using this function. Specifically, following \cite{przy}, the p-value for the left one-sided test is $F_U(u_{obs}\vert \fiber)$, the p-value for the right one-sided test is $1-F_U(u_{obs}-1\vert \fiber)$ and for the two-sided test the p-value is $\min\lbrace 2 \cdot\min \lbrace \\F_U(u_{obs}\vert \fiber), 1-F_U(u_{obs}-1\vert \fiber) \rbrace ,1 \rbrace$.

We sample from the conditional space $\mathcal{F}_{N,t}$ defined in equation \eqref{eq:fiber} using a MCMC procedure. We know from \cite{sturmfels} that using a Markov basis we can explore the fiber $\fiber$ by adding and subtracting moves, obtaining a connected Markov chain. 

To find a Markov basis we consider the vector $\mathbf{1}_N$ which appears in the definition \eqref{eq:fiber} of $\fiber$ and depends on the sample size $N=n_1+n_2$ only. By means of the induction principle on $N$, it can be proved that a basis is always given by the $N\times (N-1)$ matrix
\begin{equation}
\label{eq:basis}
\mathcal{B}_{N}=\begin{bmatrix*}[r]
\mathbf{m}_1& \mathbf{m}_2&\ldots&\mathbf{m}_{N-1}
\end{bmatrix*}=\underbrace{\begin{bmatrix*}[r]
1&1&\dots &1\\
-1&0& \dots &0\\
0&-1&\dots&0\\
\vdots & \vdots & \vdots &\vdots\\
0&0&\dots &-1\\	
\end{bmatrix*}}_{N-1}
\left.
\begin{aligned}
  & \\
  & \\
  & \\
  & \\
  & \\
\end{aligned} 
\right\rbrace N,
\end{equation}
where each column represents a move. The same basis can be found using the \texttt{4ti2} software \citep{4ti2}. 

Thanks to this basis we can build a graph $\mathcal{G}=(V,E)$ over the fiber $\mathcal{F}_{N,t}$. The set of vertices $V$ is given by the vectors $\mathbf{y}\in\fiber$ and each pair of vectors $\mathbf{y},\ \mathbf{x}\in \mathcal{F}_{N,t}$ is linked by an edge if a move $\mathbf{m}\in\mathcal{B}_{N}$ exists such that $\mathbf{y}=\mathbf{x}\pm\mathbf{m}$.
The number of vertices is $\vert V\vert=\binom{t+N-1}{N-1}$ and the number of edges is given by
\begin{align}
\label{eq:edges}
\vert E\vert =&(N-1)\binom{t-1}{N-1}+\notag\\
&\frac{1}{2}\sum_{z=1}^{N-1}(2N-2-z)\binom{t-1}{N-1-z} \binom{N-1}{z}+\notag\\
&\frac{1}{2}\sum_{z=1}^{N-1}(N-z)\binom{t-1}{N-1-z}\binom{N-1}{z-1}.
\end{align}
Furthermore $\mathcal{G}$ is connected (by definition of Markov basis) and bipartite. We invite the reader to refer to Appendix \ref{app:graph} for further details on the characteristics of the graph $\mathcal{G}$ (including the proof of equation \eqref{eq:edges}).
\begin{figure*}
\begin{center}

\resizebox{\textwidth}{!}{%
\begin{tikzpicture}[-latex,auto,node distance=2cm and 2cm,on grid, 
	semithick, state/.style={circle,top color=white,bottom color=white, 
  	draw, black, text=black,minimum width=1cm}] 
%
%
  	
\node[state] (AA) {$600$}; 
\node[state] (AB) [right=of AA] {$510$};
\node[state] (AC) [right=of AB] {$420$}; 
\node[state] (AD) [right=of AC] {$330$}; 
\node[state] (AE) [right=of AD] {$240$}; 
\node[state] (AF) [right=of AE] {$150$}; 
\node[state] (AG) [right=of AF] {$060$}; 
\node        (6) [right=of AG] {\large $u=6$};

\node[state] (BA) [below=of AA] {$501$};
\node[state] (BB) [right=of BA] {$411$};
\node[state] (BC) [right=of BB] {$321$};
\node[state] (BD) [right=of BC] {$231$};
\node[state] (BE) [right=of BD] {$141$};
\node[state] (BF) [right=of BE] {$051$};
\node        (5)  [below=of 6] {\large $u=5$};

\node[state] (CA) [below=of BA] {$402$};
\node[state] (CB) [right=of CA] {$312$};
\node[state] (CC) [right=of CB] {$222$};
\node[state] (CD) [right=of CC] {$132$};
\node[state] (CE) [right=of CD] {$042$};
\node        (4)  [below=of 5] {\large $u=4$};

\node[state] (DA) [below=of CA] {$303$};
\node[state] (DB) [right=of DA] {$213$};
\node[state] (DC) [right=of DB] {$123$};
\node[state] (DD) [right=of DC] {$033$};
\node        (3)  [below=of 4] {\large $u=3$};

\node[state] (EA) [below=of DA] {$204$};
\node[state] (EB) [right=of EA] {$114$};
\node[state] (EC) [right=of EB] {$024$};
\node        (2)  [below=of 3] {\large $u=2$};

\node[state] (FA) [below=of EA] {$105$};
\node[state] (FB) [right=of FA] {$015$};
\node        (1)  [below=of 2] {\large $u=1$};

\node[state] (GA) [below=of FA] {$006$};
\node        (0)  [below=of 1] {\large $u=0$};

\path[-] (AA) edge node[above]{$\mathbf{m}_1$} (AB);
\path[-] (AB) edge node[above]{$\mathbf{m}_1$} (AC);
\path[-] (AC) edge node[above]{$\mathbf{m}_1$} (AD);
\path[-] (AD) edge node[above]{$\mathbf{m}_1$} (AE);
\path[-] (AE) edge node[above]{$\mathbf{m}_1$} (AF);
\path[-] (AF) edge node[above]{$\mathbf{m}_1$} (AG);
\path[-] (BA) edge node[above]{$\mathbf{m}_1$} (BB);
\path[-] (BB) edge node[above]{$\mathbf{m}_1$} (BC);
\path[-] (BC) edge node[above]{$\mathbf{m}_1$} (BD);
\path[-] (BD) edge node[above]{$\mathbf{m}_1$} (BE);
\path[-] (BE) edge node[above]{$\mathbf{m}_1$} (BF);
\path[-] (CA) edge node[above]{$\mathbf{m}_1$} (CB);
\path[-] (CB) edge node[above]{$\mathbf{m}_1$} (CC);
\path[-] (CC) edge node[above]{$\mathbf{m}_1$} (CD);
\path[-] (CD) edge node[above]{$\mathbf{m}_1$} (CE);
\path[-] (DA) edge node[above]{$\mathbf{m}_1$} (DB);
\path[-] (DB) edge node[above]{$\mathbf{m}_1$} (DC);
\path[-] (DC) edge node[above]{$\mathbf{m}_1$} (DD);
\path[-] (EA) edge node[above]{$\mathbf{m}_1$} (EB);
\path[-] (EB) edge node[above]{$\mathbf{m}_1$} (EC);
\path[-] (FA) edge node[above]{$\mathbf{m}_1$} (FB);

\path[-] (AA) edge node[left]{$\mathbf{m}_2$} (BA);
\path[-] (BA) edge node[left]{$\mathbf{m}_2$} (CA);
\path[-] (CA) edge node[left]{$\mathbf{m}_2$} (DA);
\path[-] (DA) edge node[left]{$\mathbf{m}_2$} (EA);
\path[-] (EA) edge node[left]{$\mathbf{m}_2$} (FA);
\path[-] (FA) edge node[left]{$\mathbf{m}_2$} (GA);
\path[-] (AB) edge node[left]{$\mathbf{m}_2$} (BB);
\path[-] (BB) edge node[left]{$\mathbf{m}_2$} (CB);
\path[-] (CB) edge node[left]{$\mathbf{m}_2$} (DB);
\path[-] (DB) edge node[left]{$\mathbf{m}_2$} (EB);
\path[-] (EB) edge node[left]{$\mathbf{m}_2$} (FB);
\path[-] (AC) edge node[left]{$\mathbf{m}_2$} (BC);
\path[-] (BC) edge node[left]{$\mathbf{m}_2$} (CC);
\path[-] (CC) edge node[left]{$\mathbf{m}_2$} (DC);
\path[-] (DC) edge node[left]{$\mathbf{m}_2$} (EC);
\path[-] (AD) edge node[left]{$\mathbf{m}_2$} (BD);
\path[-] (BD) edge node[left]{$\mathbf{m}_2$} (CD);
\path[-] (CD) edge node[left]{$\mathbf{m}_2$} (DD);
\path[-] (AE) edge node[left]{$\mathbf{m}_2$} (BE);
\path[-] (BE) edge node[left]{$\mathbf{m}_2$} (CE);
\path[-] (AF) edge node[left]{$\mathbf{m}_2$} (BF);

\end{tikzpicture}
}
\end{center} 
\caption{Graph on the fiber $\mathcal{F}_{N,t}$ with $t=6$ and $N=3$}
\label{fig:grafo}
\end{figure*}
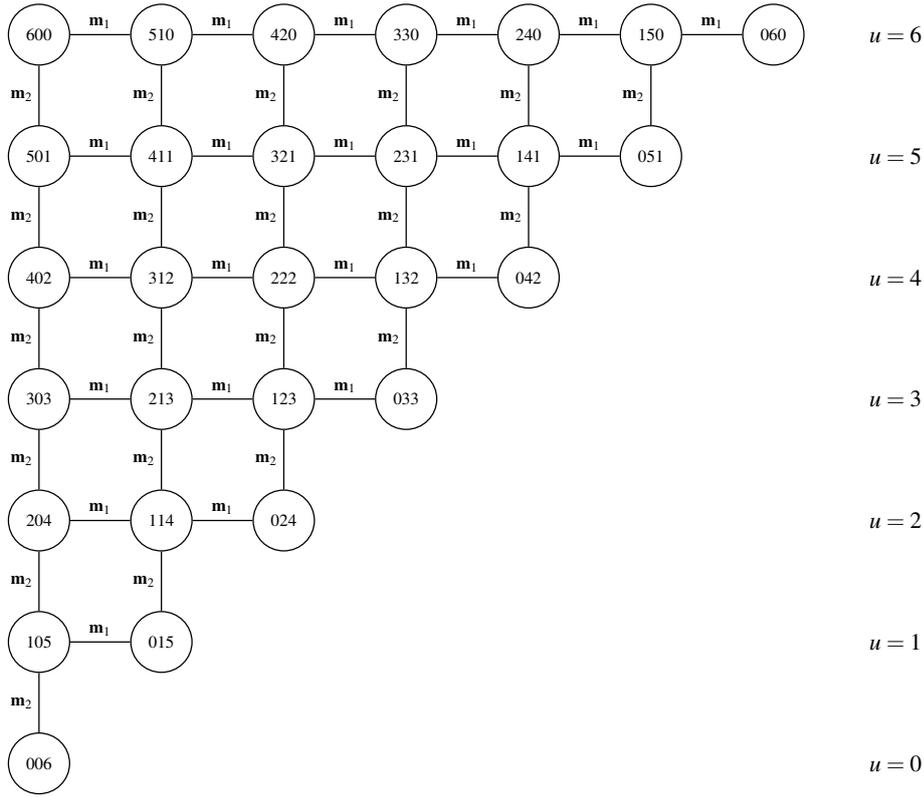

We consider $n_1=2$, $n_2=1$ and $t=6$ as an example. The cardinality of the fiber $\mathcal{F}_{3,6}$ is $\binom{6+3-1}{3-1}=28$ and the corresponding Markov basis is
\begin{equation} \label{eq:b3}
\mathcal{B}_{3}=\begin{bmatrix*}[r]
\mathbf{m}_1&\mathbf{m}_2\\
\end{bmatrix*}=\begin{bmatrix*}[r]
1&1\\
-1&0\\
0&-1\\
\end{bmatrix*}.
\end{equation}
This results in the graph $\mathcal{G}$ shown in Figure \ref{fig:grafo}. The graph has 28 vertices and 42 edges, as given by previous formulae.

Thanks to the Markov basis $\mathcal{B}_{N}$ we set up a connected Markov chain over $\mathcal{F}_{N,t}$ and, under $H_0:\psi(\mu_1)=\psi(\mu_2)=\psi(\mu)$ (which is equivalent to $H_0:\mu_1=\mu_2=\mu$), we use the Metropolis-Hastings algorithm (as in \cite{aokitakemurabook} and \cite{aokitakemura}) to modify transition probabilities and grant convergence to
\begin{align}
\label{eq:target}
p(\mathbf{y})\equiv f(y\vert\mu) &=\prod_{i=1}^N G(\mu)H(y_i)\exp\lbrace y_i\psi(\mu) \rbrace = \\ \notag
&=G(\mu)^N \exp\lbrace \psi(\mu)t \rbrace\prod_{i=1}^N H(y_i)= \\ \notag
&=C \prod_{i=1}^N H(y_i),
\end{align}
where $C=G(\mu)^N \exp\lbrace \psi(\mu)t \rbrace$.


We set the initial state $\mathbf{y}=\mathbf{y}_{obs}$, where $\mathbf{y}_{obs}$ is the observed sample, and at every step we select a random move $\mathbf{m}_K$, $1\leq K\leq N-1$, from $\mathcal{B}_{N}$ and a random sign $\varepsilon=\pm1$. If $\mathbf{y}+\varepsilon\cdot\mathbf{m}_K$ has not negative entries (i.e. $\varepsilon\cdot\mathbf{m}_K$ is admissible) the transition occurs with probability
\begin{align}
q&=\min\left\lbrace 1,\frac{p(\mathbf{y}+\varepsilon\cdot\mathbf{m}_K)}{p(\mathbf{y})}\right\rbrace=\notag \\
&=\min\left\lbrace 1,\frac{C\cdot\prod_{i=1}^N H((\mathbf{y}+\varepsilon\cdot\mathbf{m}_K)_i)}{C\cdot\prod_{i=1}^N H(y_i)}\right\rbrace= \notag\\
&=\min\left\lbrace 1,\frac{H(y_1+\varepsilon)\cdot H(y_{K+1}-\varepsilon)}{H(y_1)\cdot H(y_{K+1})}\right\rbrace.
\label{eq:trans_q}
\end{align}

From equation \eqref{eq:trans_q} we observe that $C$ is not needed in order to perform the Metropolis-Hastings algorithm. 
However in Section \ref{sec:permutation1} we will give a simple way to compute the normalizing constant $C$. 

Through the walk on the fiber we sample from $\fiber$ and get an approximation of the cumulative distribution function of $U$ given $T$ shown in equation \eqref{eq:dist_fiber}.

There exists an accelerated version of this algorithm, proposed by \cite{sturmfels} and frequently adopted in the literature (see \cite{aokitakemurabook} and \cite{aokitakemura}), since it grants a much faster convergence to the target distribution.
At each step, given $\mathbf{y}\in\fiber$, we randomly select a move $\mathbf{m}_K\in\mathcal{B}_N$ and we consider all possible transitions $\mathbf{y}+\lambda\cdot \mathbf{m}_K$ such that $\mathbf{y}+\lambda\cdot \mathbf{m}_K \in \fiber$ with $\lambda \in \mathbb{Z}$. The set of multipliers $\lambda$ which are admissible is given by
$$L=\left\lbrace \lambda\in\mathbb{Z}: \mathbf{y}+\lambda\cdot \mathbf{m}_K\geq0\right\rbrace.$$

\noindent
It can be shown that, in this case,
$$
L=\left[-y_1,y_{K+1}\right]\cap \mathbb{Z}.
$$
The transition probabilities $\forall \lambda^*\in L$ are
\begin{align}
\label{eq:trans_q_lambda}
q_{\lambda^*}&=\frac{C\cdot\prod_{i=1}^N H((\mathbf{y}+\lambda^*\mathbf{m}_K)_i)}{\underset{\lambda\in L}\sum C\cdot\prod_{i=1}^N H((\mathbf{y}+\lambda\mathbf{m}_K)_i)}=\notag \\
&=\frac{H(y_1+\lambda^*)\cdot H(y_{K+1}-\lambda^*)}{\underset{\lambda\in L}\sum H(y_1+\lambda)\cdot H(y_{K+1}-\lambda)}.
\end{align}

As we observed for the standard MCMC algorithm, also in the accelerated case there is no need to compute the normalizing constant $C$. 

\subsection{The MCMC algorithm} \label{algo:mcmc}
In this work we use the accelerated version of the MCMC algorithm. 
The algorithm needs 5 inputs: the observed vector $\mathbf{y}_{obs}$, the Markov basis $\mathcal{B}_{N}$, the test statistic $U(\cdot)$, the underlying measure $H(\cdot)$ and the total number of simulation $N_{sim}$. The algorithm provides an estimate of the distribution of $U$ given $T$ under $H_0$, $F_U(u\vert \fiber)$, as output.

\begin{description}
\item[\textbf{Step 1}]: Set $u_{obs}=U(\mathbf{y}_{obs})$, $\mathbf{y}=\mathbf{y}_{obs}$ and a counter $i=0$.
\item[\textbf{Step 2}]: Draw $\mathbf{m}_K$, $1\leq K\leq N-1$, randomly from $\mathcal{B}_{N}$.
\item[\textbf{Step 3}]: Select $\lambda^*$ from $L=\left[-y_1,y_{K+1}\right]\cap \mathbb{Z}$ with probability
\[q_{\lambda^*}=\frac{H(y_1+\lambda^*)\cdot H(y_{K+1}-\lambda^*)}{\underset{\lambda\in L}\sum H(y_1+\lambda)\cdot H(y_{K+1}-\lambda)}.\]
Update $\mathbf{y}=\mathbf{y}+\lambda^*\cdot\mathbf{m}_K$.
\item[\textbf{Step 4}]: Compute $U(\mathbf{y})$. If $U(\mathbf{y})\leq u_{obs}$ then $i=i+1$.
\item[\textbf{Step 5}]: Repeat steps 2 to 4 $N_{sim}$ times.
\item[\textbf{Step 6}]: Compute an estimate of the distribution of $U$ under $H_0$ as $i/N_{sim}$.
\end{description}

\section{Partition of the Fiber with Orbits of Permutations}
\label{sec:permutation1}
In this section we present the link between \emph{orbits of permutations} and the fiber $\fiber$. We will exploit this link to build a much more efficient MCMC procedure to sample from $\fiber$.

Consider any permutation $\mathbf{y}^*$ of a given vector $\mathbf{y}$. Clearly $\mathbf{y}^*\in\mathcal{F}_{N,t}$ because the operation of permuting does not change the sum of the entries.
Therefore every permutation can be written as the sum of $\mathbf{y}$ with an appropriate sequence of moves in $\mathcal{B}_{N}$.
As a consequence, for every vector $\mathbf{y}\in\mathcal{F}_{N,t}$, the \emph{orbit of permutations} which contains $\mathbf{y}$ (that we denote by $\pi_{\mathbf{y}}$) is a subset of the fiber $\mathcal{F}_{N,t}$. Given that different orbits do not intersect, the fiber $\mathcal{F}_{N,t}$ can be partitioned into orbits of permutations.

The number of orbits is given by the number of partitions of $t$ into $N$ or fewer parts. This number is $\partition(t,N)$ (where $\partition$ is the partition function defined in \cite{partitions} and \cite{partitions2}). The values of the partition function can be computed using the recurrence $\partition(t,N)=\partition(t,N-1)+\partition(t-N,N)$ and depend on both the sample size $N$ and the sum of entries $t$.

An interesting property of the partition into orbits of permutations is that the elements which belong to the same orbit have the same probability of being sampled from the set $\mathcal{F}_{N,t}$ (this property is known as inferentially equivalence of permutations, see \cite{pesarin2010permutation} as a recent reference on this subject):
\begin{equation}
\label{eq:inf_equi}
p(\mathbf{y})=C\prod_{i=1}^N H(y_i)=C\prod_{i=1}^N H(y_{r_i})=C\prod_{i=1}^N H(y^*_i)=p(\mathbf{y}^*),
\end{equation}
where $(r_1,\dots,r_N)$ is any permutation of $(1,\dots,N)$ and $C=G(\mu)^N \exp\lbrace \psi(\mu)t \rbrace$.

Thanks to this property we can compute the conditional probability $\mathbb{P}(\mathbf{y}\vert \pi_{\mathbf{y}})$ of choosing an observation $\mathbf{y}$ inside an orbit $\pi_{\mathbf{y}}$. It holds that 
\begin{equation}
\label{eq:p_orbita}
\mathbb{P}(\pi_{\mathbf{y}})=
\mathbb{P}(\bigcup\limits_{\mathbf{y}^*\in\pi_{\mathbf{y}}} \left\{\mathbf{y}^{*}\right\})
= \sum_{\mathbf{y}^*\in\pi_{\mathbf{y}}}p(\mathbf{y}^*) 
=\#\pi_{\mathbf{y}}\cdot p(\mathbf{y})=\#\pi_{\mathbf{y}}\cdot C\prod_{i=1}^N H(y_i),
\end{equation}

where $\#\pi_{\mathbf{y}}$ is the cardinality of $\pi_{\mathbf{y}}$, i.e. the number of distinct permutations in $\pi_{\mathbf{y}}$ (see equation \eqref{eq:exact_card_orbit} for a formula to compute such a quantity). We denote by $p_\pi(\pi_{\mathbf{y}})$ the probability $\mathbb{P}(\pi_{\mathbf{y}})$.

By conditioning, $p(\mathbf{y})$ can be written as $p_\pi(\pi_{\mathbf{y}})\cdot \mathbb{P}(\mathbf{y}\vert \pi_{\mathbf{y}}) $  and given that from \eqref{eq:p_orbita} we get $p(\mathbf{y})=p_\pi(\pi_{\mathbf{y}})/\#\pi_{\mathbf{y}}$, we conclude that
\begin{equation}
\label{eq:condpi}
\mathbb{P}(\mathbf{y}\vert \pi_{\mathbf{y}})=1/\#\pi_{\mathbf{y}},
\end{equation}
i.e. given an orbit, the probability of selecting one of its elements is uniform.

It is worth noting that this is not true for every partition $\Gamma=\lbrace \gamma\rbrace$ of $\mathcal{F}_{N,t}$ because $\mathbb{P}(\mathbf{y}\vert\gamma)$ is not constant over $\gamma$ in general. 

We observe that although the normalizing constant $C$ defined in \eqref{eq:target} is not needed to perform the MCMC samplings described in Section \ref{sec:mcmc}, this can be easily computed by means of \eqref{eq:p_orbita} without knowing the parameter $\mu=\mu_1=\mu_2$:
\begin{align*}
1&=\underset{\pi_{\mathbf{y}}\subseteq\fiber}\sum p_\pi(\pi_{\mathbf{y}})=\underset{\pi_{\mathbf{y}}\subseteq\fiber}\sum \left(\#\pi_{\mathbf{y}}\cdot C\prod_{i=1}^N H(y_i)\right)=\\
&=C\cdot\underset{\pi_{\mathbf{y}}\subseteq\fiber}\sum \left(\#\pi_{\mathbf{y}}\cdot \prod_{i=1}^N H(y_i)\right).
\end{align*}
Hence $C=\left(\underset{\pi_{\mathbf{y}}\subseteq\fiber}\sum \#\pi_{\mathbf{y}}\cdot \prod_{i=1}^N H(y_i)\right)^{-1}$, an expression that does not contain the unknown parameter $\mu=\mu_1=\mu_2$.

As an example let us consider again the fiber $\mathcal{F}_{3,6}$ in Figure \ref{fig:grafo}. It can be partitioned into $\partition(6,3)=7$ orbits. If we assume that the distribution of data under $H_0$ is Poisson($\mu$), we get $H(y_i)=1/y_i!$ and $C=80/81$. We can then compute the probability $p_\pi(\pi_{\mathbf{y}})$ of each orbit given the probability $p(\mathbf{y})$ of one of its generators. The results are shown in Table \ref{tab:7orbits}.

\begin{table}[ht]
\centering
\caption{Probabilities of the 7 orbits contained in $\mathcal{F}_{3,6}$}
\label{tab:7orbits}
\begin{tabular}{cccr}
\hline\noalign{\smallskip}
$\mathbf{y}$&$p(\mathbf{y})$&$\#\pi_{\mathbf{y}}$&$ p_{\pi}(\pi_{\mathbf{y}})$\\
\noalign{\smallskip}\hline\noalign{\smallskip}
$(6,0,0)$&$\frac{80}{81}\cdot\frac{1}{  6!0!0!}$&3&$3/729$\\
\noalign{\smallskip}
$(5,1,0)$&$\frac{80}{81}\cdot\frac{1}{  5!1!0!}$&6&$36/729$\\
\noalign{\smallskip}
$(4,2,0)$&$\frac{80}{81}\cdot\frac{1}{ 4!2!0!}$&6&$90/729$\\
\noalign{\smallskip}
$(3,3,0)$&$\frac{80}{81}\cdot\frac{1}{ 3!3!0!}$&3&$60/729$\\
\noalign{\smallskip}
$(3,2,1)$&$\frac{80}{81}\cdot\frac{1}{  3!2!1!}$&6&$360/729$\\
\noalign{\smallskip}
$(4,1,1)$&$\frac{80}{81}\cdot\frac{1}{ 4!1!1!}$&3&$90/729$\\
\noalign{\smallskip}
$(2,2,2)$&$\frac{80}{81}\cdot\frac{1}{ 2!2!2!}$&1&$90/729$\\
\noalign{\smallskip}\hline
\end{tabular}
\end{table}

Given that the vectors which belong to the same orbit have the same probability, if the cardinality of the orbits is too large, we can use a standard Monte Carlo to sample from each orbit.

We devise a 2-step sampling method. We sample an orbit $\pi$ from the set of orbits $\pi\subseteq \fiber$ with probability  $p_\pi(\pi)$ and then we use a standard Monte Carlo to sample vectors $\mathbf{y}\in\pi$ with uniform distribution.

This procedure is more efficient than the one presented in Section \ref{sec:mcmc} because first we sample from the set of orbits, whose cardinality $\partition(t,N)$ is much smaller than the cardinality of the fiber $\binom{t+N-1}{N-1}$,  then we sample from a given orbit using the uniform distribution, which is extremely fast from a computational point of view.

As an example refer to the fiber $\mathcal{F}_{3,6}$ mentioned above, for which the number of orbits is $\partition(t,N)=7$ and the number of elements in the fiber is $\binom{t+N-1}{N-1}=28$. We can sample orbits $\pi$ with the probabilities $p_\pi(\pi)$ in Table \ref{tab:7orbits} by means of a MCMC routine, and then perform a standard Monte Carlo with uniform probabilities $1/\#\pi$ over every sampled orbit $\pi$.

In order to sample from the set of orbits of permutations we write the vectors $\mathbf{y}=(y_i:\ i=1,\ldots,N)\in\fiber$ in terms of the frequencies of each value of the $y_i$. The admissible values for each $y_i$ are $0\leq y_i\leq t$ and we associate the frequency $f_j$ to each $0\leq j\leq t$. As an example consider the vector $(0,4,2)\in\mathcal{F}_{3,6}$, whose corresponding frequency vector $\mathbf{f}=(f_j:\ j=0,\ldots,t)$ is shown in Table \ref{tab:frequenza}. Thanks to this notation vectors $(0,4,2)$ and $(2,0,4)$, which belong to the same orbit, correspond to the same frequency vector. We represent an orbit $\pi$ by the corresponding frequency vector $\mathbf{f}_{\pi}$.
\begin{table}[h]
\centering
\caption{Frequency-based representation of vector $(0,4,2)$}
\label{tab:frequenza}    
\begin{tabular}{llllllll}
\hline\noalign{\smallskip}
$j$ & 0&1	&2		&3	&4	&5	&6\\
\hline\noalign{\smallskip}
$f_j$ &1	&0	&1	&0	&1	&0	&0\\
\hline\noalign{\smallskip}
\end{tabular}
\end{table}

We denote by $\mathcal{F}^{(\pi)}_{N,t}$ the set of orbits of permutations $\pi$ contained in $\fiber$:
\begin{equation}
\label{eq:orbit_fiber}
\mathcal{F}^{(\pi)}_{N,t}=\left\lbrace (f_0,\ldots,f_t)\in\mathbb{N}^{t+1}: \sum_{j=0}^t jf_j=t\ \text{and}\  \sum_{j=0}^t f_j=N\right\rbrace;
\end{equation}
or in matrix notation
\begin{equation}
\label{eq:orbit_fiber_matrix}
\mathcal{F}^{(\pi)}_{N,t}=\left\lbrace \mathbf{f}\in\mathbb{N}^{t+1}: A^{(\pi)}_t \mathbf{f}=\begin{pmatrix}
t\\
N
\end{pmatrix}\right\rbrace,
\end{equation}
where $A^{(\pi)}_t=\begin{pmatrix}
0&1&\ldots&t\\
1&1&\ldots&1
\end{pmatrix}$. We observe that the matrix $A^{(\pi)}_t$ only depends on $t$, the sum of entries of $\mathbf{y}\in\fiber$, and not on the sample size $N$.

Thanks to this frequency-based notation and by means of equation \eqref{eq:p_orbita} we can compute the probability of a generic orbit $\pi$:
\begin{equation}
\label{eq:exact_p_orbit}
p_\pi(\pi) \equiv p_\pi(\mathbf{f}_\pi)=\#\pi_{\mathbf{y}}\cdot C\prod_{i=1}^N H(y_i)= \#\pi_{\mathbf{y}}\cdot C\prod_{j=0}^t H(j)^{f_j}
\end{equation}
where
\begin{equation}
\label{eq:exact_card_orbit}
\#\pi_{\mathbf{y}}=\frac{N!}{f_0!\cdot\ldots\cdot f_t!}.
\end{equation}
Hence the orbits $\pi \equiv \mathbf{f}_{\pi}$ follow a multinomial distribution constrained to $\sum_{j=0}^t j f_j=t$.

To perform the MCMC sampling over the orbits of permutations we need a Markov basis $\mathcal{B}^{(\pi)}_t$ for the fiber $\mathcal{F}^{(\pi)}_{N,t}$. We show how to build a Markov basis $\mathcal{B}^{(\pi)}_t$ for the fiber $\mathcal{F}^{(\pi)}_{N,t}$ in the Appendix \ref{app:mb}. We observe that $\mathcal{B}^{(\pi)}_t$ is a subset of the Markov basis given by the software \texttt{4ti2} \citep{4ti2} when the matrix $A^{(\pi)}_t$ is given as input. This is due to the fact that \texttt{4ti2} gives a basis for \emph{all} the fibers $\lbrace \mathbf{f} :A^{(\pi)}_t \mathbf{f}=\mathbf{b}, \mathbf{b}\geq \mathbf{0}\rbrace$, 
while we have built a Markov basis for the specific fiber $\mathcal{F}^{(\pi)}_{N,t}$, where $\mathbf{b}=\begin{pmatrix}
t\\
N
\end{pmatrix}$. It happens that some moves provided by  \texttt{4ti2} cannot be applied  because the entries of the resulting vector become negative. The availability of a specific Markov basis is an advantage from a computational point of view because no computational time is wasted in considering a-priori non-admissible moves. 

In the Appendix \ref{app:mb} we also show how to compute the number of moves in $\mathcal{B}^{(\pi)}_t$.

As an example we report the basis for $\mathcal{F}_{N,6}^{(\pi)}$. There are $t^2/4=9$ moves, each one represented as a column of the following matrix:
\begin{align} \label{eq:b6}
\mathcal{B}^{(\pi)}_6&=\begin{bmatrix*}[r]
\mathbf{m}^{(\pi)}_1\ldots \mathbf{m}^{(\pi)}_9
\end{bmatrix*}=\notag \\
&=\begin{bmatrix*}[r]
-1 &-1& -1& -1& -1& -1& -1& -1& -1\\ 
2 &1& 1& 0& 1& 0& 1& 0& 0 \\
-1 &1 &0& 2& 0& 1& 0& 1& 0 \\
0 &-1& 1& 0& 0& 1& 0& 0& 2 \\
0 &0& -1& -1& 1& 0& 0 &1 &0 \\
0 &0& 0& 0& -1& -1& 1 &0& 0 \\
0 &0 &0 &0 &0& 0 &-1 &-1 &-1 \\
\end{bmatrix*}.
\end{align}
%

Finally, we set up a connected Markov chain over $\mathcal{F}^{(\pi)}_{N,t}$, under $H_0$. We do not consider the accelerated version of the MCMC procedure, but only the standard one, because as we will see in Section \ref{sec:sim} the convergence speed of the standard algorithm is already very high.

As we did in Section \ref{sec:mcmc}, we modify transition probabilities according to the Metropolis-Hastings theorem to grant convergence to $p_\pi(\pi)$ as defined in equation \eqref{eq:exact_p_orbit}. If we are in orbit $\pi=\mathbf{f}_\pi$ and $\varepsilon\cdot \mathbf{m}^{(\pi)}$, with $\varepsilon=\pm1$, is admissible for $\mathbf{f}_\pi=(f_0,\ldots,f_t)$, the transition toward $\mathbf{f}_\pi^{'}=(f'_0,\ldots,f'_t)=\mathbf{f}_\pi+\varepsilon\cdot \mathbf{m}^{(\pi)}$ occurs with probability
\begin{align*}
q_{\pi}&=\min\left\lbrace 1, \frac{p_\pi(\mathbf{f}_\pi^{'})}{p_\pi(\mathbf{f}_\pi)}\right\rbrace=\\
&=\min\left\lbrace 1, \frac{\frac{N!}{f_0^{'}!\cdot\ldots \cdot f_t^{'}!}\cdot C\cdot\prod_{j=0}^t H(j)^{f_j^{'}}}
{\frac{N!}{f_0!\cdot\ldots \cdot f_t!}\cdot C\cdot\prod_{j=0}^t H(j)^{f_j}}\right\rbrace=\\
&=\min\left\lbrace 1, \frac{f_0!\cdot\ldots \cdot f_t!\cdot\prod_{j=0}^t H(j)^{f_j^{'}}}{
f_0^{'}!\cdot\ldots \cdot f_t^{'}!\cdot\prod_{j=0}^t H(j)^{f_j}}\right\rbrace.
\end{align*}
As we observed for the others MCMC routines, also in this case there is no need to know the normalizing constant $C$.

Thanks to this analysis we propose a new MCMC sampling algorithm. This new algorithm makes use of the distribution of $U$ given $T$ over one orbit $\pi$, that we denote by $F_U(u\vert \pi)$,
\begin{equation}
\label{eq:dist_orbit}
F_U(u\vert \pi)=\mathbb{P}(U(\mathbf{y})\leq u\vert \mathbf{y}\in\pi)=\sum_{\mathbf{y}\in \pi}\mathbb{P}(\mathbf{y}\vert \pi)\mathbb{I}_{(U(\mathbf{y})\leq u)}(\mathbf{y})=\frac{1}{\# \pi}\sum_{\mathbf{y}\in \pi}\mathbb{I}_{(U(\mathbf{y})\leq u)}(\mathbf{y}).
\end{equation}

\subsection{The orbit-based MCMC algorithm} \label{algo:mcmc_orbit} 
The algorithm needs 5 inputs: the observed vector $\mathbf{y}_{obs}$, the Markov basis $\mathcal{B}^{(\pi)}_t$, the test statistic $U(\cdot)$, the probability $p_\pi(\cdot)$ and the total number of simulation $N_{sim}$. The algorithm gives an estimate of  the distribution of $U$ given $T$ under $H_0$, $F_U(u\vert \fiber)$, as output.
\begin{description}
\item[\textbf{Step 1}]: Compute $u_{obs}=U(\mathbf{y}_{obs})$ and perform a standard Monte Carlo over $\pi_{\mathbf{y}_{obs}}$ to get $\hat{F}_U(u_{obs}\vert \pi_{\mathbf{y}_{obs}})$, the Monte Carlo approximation of the cdf $F_U(u_{obs}\vert \pi_{\mathbf{y}_{obs}})$.
\item[\textbf{Step 2}]: Transform $\mathbf{y}_{obs}$ into its frequency-based representation $\mathbf{f}_{\pi,obs}$ and set $\mathbf{f}_{\pi}=\mathbf{f}_{\pi,obs}$.
\item[\textbf{Step 3}]: Select $\mathbf{m}_K^{(\pi)}$ randomly from $\mathcal{B}^{(\pi)}_t$ and a random sign $\varepsilon$.
\item[\textbf{Step 4}]: If $\mathbf{f}_{\pi}+\varepsilon\cdot\mathbf{m}_K^{(\pi)}\geq0$, then update $\mathbf{f}_{\pi}$ with probability
\[
\min\left\lbrace 1, \frac{f_0!\cdot\ldots \cdot f_t!\cdot\prod_{j=0}^t H(j)^{(\mathbf{f}_\pi+\varepsilon\cdot \mathbf{m}^{(\pi)})_j}}{
(\mathbf{f}_\pi+\varepsilon\cdot \mathbf{m}^{(\pi)})_0!\cdot\ldots\cdot (\mathbf{f}_\pi+\varepsilon\cdot \mathbf{m}^{(\pi)})_t!\cdot\prod_{j=0}^t H(j)^{f_j}}\right\rbrace
\]
\item[\textbf{Step 5}]: Perform a standard Monte Carlo sampling over the orbit $\pi$ induced by $\mathbf{f}_{\pi}$ and compute the corresponding Monte Carlo cdf $\hat{F}_U(u_{obs}\vert \pi)$
\item[\textbf{Step 6}]: Repeat steps 2 to 5 $N_{sim}$ times.
\item[\textbf{Step 7}]: Compute an estimate of the distribution of $U$ under $H_0$ as the average $$\frac{1}{N_{sim}}\sum_{i=1}^{N_{sim}}\hat{F}_U(u_{obs}\vert \pi_i).$$
\end{description}

We point out that the Monte Carlo samplings in Steps 1 and 5 do not affect the convergence to the exact distribution because of the well-known properties of the Monte Carlo cumulative distribution function. In fact the Monte Carlo cdf is an unbiased estimator of $F_U(u\vert\pi)$, the cdf over the whole orbit $\pi$, which converges uniformly to $F_U(u\vert\pi)$ \citep{casella}.

In Section \ref{sec:compare} we compare the two estimators of $F_U(u\vert \fiber)$, the distribution of $U$ given $T$ under $H_0$, based on the indicator function $\mathbb{I}_{(U(\mathbf{y})\leq u)}(\mathbf{y})$ used in the algorithm described in Section \ref{algo:mcmc} and the distribution over one permutation orbit $\pi$, $F_U(u\vert \pi)$ used in the algorithm described in Section \ref{algo:mcmc_orbit}.

\section{Comparison of Estimators}
\label{sec:compare}
As stated above, we consider two estimators of $F_U(u\vert \fiber)$: the indicator function $\mathbb{I}_{(U(\mathbf{y})\leq u)}(\mathbf{y})$, which is used in the MCMC sampling presented in Section \ref{algo:mcmc}, and $F_U(u\vert \pi)$, used in the MCMC sampling presented in Section \ref{algo:mcmc_orbit}.
We are going to prove that the second estimator has lower dispersion in terms of variance and mean absolute deviation (MAD) than the other.

First we prove that both estimators are unbiased.
\begin{proposition}
\label{prop} $\mathbb{E}_{p}\left[\mathbb{I}_{(U(\mathbf{y})\leq u)}(\mathbf{y})\right]=\mathbb{E}_{p_\pi}\left[F_U(u\vert \pi)\right]=F_U(u\vert \fiber)$.
\end{proposition}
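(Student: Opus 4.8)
The plan is to verify the two equalities separately, each by a direct computation from the definitions, relying only on the partition of $\fiber$ into orbits and on the inferential equivalence \eqref{eq:inf_equi}.

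First I would dispatch $\mathbb{E}_{p}\left[\mathbb{I}_{(U(\mathbf{y})\leq u)}(\mathbf{y})\right]=F_U(u\vert \fiber)$. This is immediate: by definition the left-hand side is $\sum_{\mathbf{y}\in\fiber}p(\mathbf{y})\,\mathbb{I}_{(U(\mathbf{y})\leq u)}(\mathbf{y})$, which is literally the expression for $F_U(u\vert \fiber)$ recorded in \eqref{eq:dist_fiber}.

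Next, for $\mathbb{E}_{p_\pi}\left[F_U(u\vert \pi)\right]=F_U(u\vert \fiber)$, I would expand the expectation as a sum over the set of orbits and substitute the two key facts. From \eqref{eq:dist_orbit}, $F_U(u\vert \pi)=\frac{1}{\#\pi}\sum_{\mathbf{y}\in\pi}\mathbb{I}_{(U(\mathbf{y})\leq u)}(\mathbf{y})$; from \eqref{eq:p_orbita} together with the constancy of $p$ on $\pi$ given by \eqref{eq:inf_equi}, $p_\pi(\pi)=\#\pi\cdot p(\mathbf{y})$ for any representative $\mathbf{y}\in\pi$. Multiplying, the factors $\#\pi$ cancel and $p_\pi(\pi)F_U(u\vert \pi)=\sum_{\mathbf{y}\in\pi}p(\mathbf{y})\,\mathbb{I}_{(U(\mathbf{y})\leq u)}(\mathbf{y})$, where I use once more that $p(\mathbf{y})$ takes the common orbit value at every $\mathbf{y}\in\pi$. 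Summing over all orbits and invoking the fact, established at the start of Section \ref{sec:permutation1}, that $\{\pi\}$ partitions $\fiber$, the double sum $\sum_{\pi\subseteq\fiber}\sum_{\mathbf{y}\in\pi}$ collapses to $\sum_{\mathbf{y}\in\fiber}$, yielding $\sum_{\mathbf{y}\in\fiber}p(\mathbf{y})\,\mathbb{I}_{(U(\mathbf{y})\leq u)}(\mathbf{y})=F_U(u\vert \fiber)$ as before.

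There is no real obstacle here; the argument is pure bookkeeping. The only point that deserves attention is keeping straight which quantities are attached to the orbit and which to the individual vector, and applying the constancy of $p$ on each orbit at the right moments so that the cancellation of $\#\pi$ and the reassembly of the sum over $\fiber$ both go through cleanly.
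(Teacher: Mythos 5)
Your proof is correct and follows essentially the same route as the paper: expand $\mathbb{E}_{p_\pi}\left[F_U(u\vert \pi)\right]$ over orbits, use the uniformity of $p$ on each orbit (equivalently $p_\pi(\pi)\,\mathbb{P}(\mathbf{y}\vert\pi)=p(\mathbf{y})$) to cancel the factor $\#\pi$, and collapse the double sum over the partition into a single sum over $\fiber$. The only cosmetic difference is that the paper routes the cancellation through the conditional probability $\mathbb{P}(\mathbf{y}\vert\pi)=1/\#\pi$ from equation \eqref{eq:condpi}, whereas you substitute $p_\pi(\pi)=\#\pi\cdot p(\mathbf{y})$ directly; these are the same step.
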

\begin{proof}
If we compute the expectation of $F_U(u\vert \pi)$ using $p_\pi$ we get
\begin{align*}
\mathbb{E}_{p_\pi}\left[F_U(u\vert \pi)\right]&=\sum_{\pi\subseteq\mathcal{F}_{N,t}} p_\pi(\pi) F_U(u\vert \pi) \\
&=\sum_{\pi\subseteq\mathcal{F}_{N,t}} p_\pi(\pi)\frac{1}{\# \pi}\sum_{\mathbf{y}\in \pi}\mathbb{I}_{(U(\mathbf{y})\leq u)}(\mathbf{y})\\
&=\sum_{\pi\subseteq\mathcal{F}_{N,t}} p_\pi(\pi)\sum_{\mathbf{y}\in \pi}\mathbb{I}_{(U(\mathbf{y})\leq u)}(\mathbf{y})\mathbb{P}(\mathbf{y}\vert\pi)\\
&=\sum_{\pi\subseteq\mathcal{F}_{N,t}}\sum_{\mathbf{y}\in \pi}\mathbb{I}_{(U(\mathbf{y})\leq u)}(\mathbf{y})p(\mathbf{y})\\
&=\sum_{\mathbf{y}\in\mathcal{F}_{N,t}} p(\mathbf{y})\mathbb{I}_{(U(\mathbf{y})\leq u)}(\mathbf{y})\\
&=\mathbb{E}_{p}\left[\mathbb{I}_{(U(\mathbf{y})\leq u)}(\mathbf{y})\right]\\
&=F_U(u\vert \fiber).
\end{align*}
\end{proof}
\smartqed \qed

As a consequence of Proposition \ref{prop}, we can estimate the distribution $F_U(u\vert \fiber)$ both with the weighted average of the values $\mathbb{I}_{(U(\mathbf{y})\leq u)}$ for a sufficiently large number of points $\mathbf{y}$, where the weights are the estimated probabilities $p(\mathbf{y})$, and with the weighted average of the values $F_U(u\vert \pi)$ for a sufficiently large number of orbits $\pi$, where the weights are the estimated probabilities $p_\pi(\pi)$, and that both estimators are unbiased.

The first approach, i.e. using  $\mathbb{I}_{(U(\mathbf{y})\leq u)}$ as estimator, is performed into the MCMC algorithm presented in Section \ref{algo:mcmc}, while the second one corresponds to the algorithm in Section \ref{algo:mcmc_orbit}.

We consider two measures of dispersion (variance and MAD) for the two estimators.
The following results hold.
\begin{proposition}
\label{prop2}
$\variance_{p}\left[\mathbb{I}_{(U(\mathbf{y})\leq u)}\right] \geq \variance_{p_\pi} \left[F_U(u\vert \pi)\right]$.
\end{proposition}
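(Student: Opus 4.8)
The plan is to recognize this as an instance of the Rao--Blackwell / law-of-total-variance phenomenon: $F_U(u\mid\pi)$ is precisely the conditional expectation of the indicator $\mathbb{I}_{(U(\mathbf{y})\leq u)}(\mathbf{y})$ given the orbit $\pi$, so it is a "smoothed" version of the cruder estimator and cannot have larger variance. Concretely, let $Z=\mathbb{I}_{(U(\mathbf{y})\leq u)}(\mathbf{y})$ be the random variable on $\fiber$ (with law $p$), and let $\Pi=\pi_{\mathbf{y}}$ be the random orbit containing $\mathbf{y}$. By equation \eqref{eq:dist_orbit} and \eqref{eq:condpi}, $F_U(u\mid\pi)=\frac{1}{\#\pi}\sum_{\mathbf{y}\in\pi}\mathbb{I}_{(U(\mathbf{y})\leq u)}(\mathbf{y})=\mathbb{E}_p[Z\mid \Pi=\pi]$, and the law of $\Pi$ induced by $p$ is exactly $p_\pi$ by \eqref{eq:p_orbita}. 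Thus the inequality to prove is just $\variance_p[Z]\geq \variance_{p_\pi}[\,\mathbb{E}_p[Z\mid\Pi]\,]$.

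First I would make the conditional-expectation identification rigorous, citing \eqref{eq:dist_orbit}, \eqref{eq:condpi}, and \eqref{eq:p_orbita} to justify both that $F_U(u\mid\pi)=\mathbb{E}_p[Z\mid\Pi=\pi]$ and that $\Pi\sim p_\pi$ under $p$ (this is the content of Proposition \ref{prop}, which already gives equality of means, so both variances are measured around the same center $F_U(u\mid\fiber)$). Then I would invoke the law of total variance,
\begin{equation*}
\variance_p[Z]=\mathbb{E}_{p_\pi}\!\left[\variance_p[Z\mid\Pi]\right]+\variance_{p_\pi}\!\left[\mathbb{E}_p[Z\mid\Pi]\right],
\end{equation*}
and observe that the first term on the right is a non-negative quantity (being an expectation of variances), which immediately yields
\begin{equation*}
\variance_p[Z]\geq \variance_{p_\pi}\!\left[\mathbb{E}_p[Z\mid\Pi]\right]=\variance_{p_\pi}\!\left[F_U(u\mid\pi)\right].
\end{equation*}
If a self-contained argument is preferred over quoting the total-variance identity, one can expand both variances using $\mathbb{E}_p[Z^2]=\sum_{\mathbf{y}}p(\mathbf{y})Z(\mathbf{y})^2$ and $\mathbb{E}_{p_\pi}[F_U(u\mid\pi)^2]=\sum_\pi p_\pi(\pi)\big(\frac{1}{\#\pi}\sum_{\mathbf{y}\in\pi}Z(\mathbf{y})\big)^2$, and the difference reduces, orbit by orbit, to $\sum_\pi p_\pi(\pi)\big(\overline{Z^2}_\pi-(\overline{Z}_\pi)^2\big)\geq 0$ by Jensen (or Cauchy--Schwarz) applied to the uniform average over each orbit; since $Z$ is an indicator, $\overline{Z^2}_\pi=\overline{Z}_\pi$, so each orbit contributes $p_\pi(\pi)\,\overline{Z}_\pi(1-\overline{Z}_\pi)\geq 0$.

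I do not anticipate a serious obstacle here; the only point requiring care is the bookkeeping that the orbit partition is a genuine partition of $\fiber$ with the stated push-forward probabilities, and that the within-orbit conditional law is uniform — but all of this is already established in Section \ref{sec:permutation1} (equations \eqref{eq:inf_equi}--\eqref{eq:condpi}), so it may simply be cited. The mild subtlety worth a sentence is that strictness of the inequality is not claimed: equality holds exactly when $U$ is (almost surely under $p$) constant on each orbit relative to the threshold $u$, i.e. when the indicator does not vary within orbits, which is why the orbit-based estimator is weakly, not strictly, better in general.
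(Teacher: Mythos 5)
Your proof is correct, and the difference from the paper is one of packaging rather than substance. The paper's proof (Appendix C) reduces the claim to a comparison of second moments via Proposition \ref{prop}, uses $\mathbb{I}_{(U(\mathbf{y})\leq u)}^2=\mathbb{I}_{(U(\mathbf{y})\leq u)}$, regroups the sum over $\fiber$ by orbits using the constancy of $p$ on each orbit, and concludes from $x^2\le x$ for $x\in[0,1]$ applied to the orbit average $\frac{1}{\#\pi}\sum_{\mathbf{y}\in \pi}\mathbb{I}_{(U(\mathbf{y})\leq u)}$. Your headline argument instead identifies $F_U(u\vert \pi)$ as the conditional expectation of the indicator given the orbit (justified by \eqref{eq:dist_orbit} and \eqref{eq:condpi}), notes that the push-forward of $p$ to the set of orbits is $p_\pi$ (by \eqref{eq:p_orbita}), and quotes the law of total variance. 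These are mathematically the same inequality --- the paper's step bounding the orbit average by its square is precisely the non-negativity of the within-orbit conditional variance of an indicator --- but your framing buys three things: it exposes the result as a Rao--Blackwell statement, it works verbatim for any square-integrable statistic in place of the indicator (so the identity $\mathbb{I}^2=\mathbb{I}$ is not needed), and it makes the equality case transparent (equality iff the indicator is constant within $p_\pi$-almost every orbit). Your self-contained backup expansion is, line for line, the paper's own computation. No gaps.
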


We observe that $\mathbb{I}_{(U(\mathbf{y})\leq u)}$ follows a Bernoulli distribution with parameter $F_U(u\vert \fiber)$. There follows that the variance of $\mathbb{I}_{(U(\mathbf{y})\leq u)}$ is $F_U(u\vert \fiber) \left(1-F_U(u\vert \fiber)\right)$. 

\begin{proposition}
\label{prop3}
$\mad_{p}\left[\mathbb{I}_{(U(\mathbf{y})\leq u)}\right]\geq\mad_{p_\pi}\left[F_U(u\vert\pi)\right]$.
\end{proposition}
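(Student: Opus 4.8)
The plan is to reduce Proposition \ref{prop3} to the triangle inequality applied orbit by orbit, in a way that exactly parallels the law-of-total-variance argument behind Proposition \ref{prop2}. Write $F := F_U(u\vert\fiber)$ for the common expectation of the two estimators; by Proposition \ref{prop} this equals both $\mathbb{E}_p[\mathbb{I}_{(U(\mathbf{y})\leq u)}]$ and $\mathbb{E}_{p_\pi}[F_U(u\vert\pi)]$, so that the mean absolute deviations are
\[
\mad_p\!\left[\mathbb{I}_{(U(\mathbf{y})\leq u)}\right]=\sum_{\mathbf{y}\in\fiber}p(\mathbf{y})\,\bigl|\mathbb{I}_{(U(\mathbf{y})\leq u)}(\mathbf{y})-F\bigr|,\qquad
\mad_{p_\pi}\!\left[F_U(u\vert\pi)\right]=\sum_{\pi\subseteq\fiber}p_\pi(\pi)\,\bigl|F_U(u\vert\pi)-F\bigr|.
\]

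Next I would use the observation, already implicit in \eqref{eq:dist_orbit} and \eqref{eq:condpi}, that $F_U(u\vert\pi)$ is the conditional expectation of the indicator given the orbit:
\[
F_U(u\vert\pi)=\frac{1}{\#\pi}\sum_{\mathbf{y}\in\pi}\mathbb{I}_{(U(\mathbf{y})\leq u)}(\mathbf{y})=\sum_{\mathbf{y}\in\pi}\mathbb{P}(\mathbf{y}\vert\pi)\,\mathbb{I}_{(U(\mathbf{y})\leq u)}(\mathbf{y}).
\]
Hence $F_U(u\vert\pi)-F=\sum_{\mathbf{y}\in\pi}\mathbb{P}(\mathbf{y}\vert\pi)\bigl(\mathbb{I}_{(U(\mathbf{y})\leq u)}(\mathbf{y})-F\bigr)$ is a convex combination of the pointwise deviations over $\pi$, and the triangle inequality (Jensen for the convex map $x\mapsto\vert x\vert$) gives, for each orbit $\pi$,
\[
\bigl|F_U(u\vert\pi)-F\bigr|\ \leq\ \sum_{\mathbf{y}\in\pi}\mathbb{P}(\mathbf{y}\vert\pi)\,\bigl|\mathbb{I}_{(U(\mathbf{y})\leq u)}(\mathbf{y})-F\bigr|.
\]

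Finally I would multiply this inequality by $p_\pi(\pi)$, sum over all orbits $\pi\subseteq\fiber$, and use $p_\pi(\pi)\,\mathbb{P}(\mathbf{y}\vert\pi)=p(\mathbf{y})$ together with the fact that the orbits partition $\fiber$ — precisely the two bookkeeping steps carried out in the proof of Proposition \ref{prop} — to collapse the double sum into $\sum_{\mathbf{y}\in\fiber}p(\mathbf{y})\bigl|\mathbb{I}_{(U(\mathbf{y})\leq u)}(\mathbf{y})-F\bigr|$, which is $\mad_p[\mathbb{I}_{(U(\mathbf{y})\leq u)}]$. This yields $\mad_{p_\pi}[F_U(u\vert\pi)]\leq\mad_p[\mathbb{I}_{(U(\mathbf{y})\leq u)}]$, as claimed. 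I do not expect a genuine obstacle here: the only point needing care is that $\mad$ must be measured about the common mean $F$, so that Proposition \ref{prop} legitimately makes the two centers coincide; everything else is the triangle inequality plus the tower-property manipulation already available from Proposition \ref{prop}. It is worth noting that running the same scheme with $x\mapsto x^2$ in place of $x\mapsto\vert x\vert$ reproves Proposition \ref{prop2}, which is why both dispersion inequalities point in the same direction.
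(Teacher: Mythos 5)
Your proof is correct, and it takes a cleaner route than the paper's. The paper expands $\mad_p[\mathbb{I}_{(U(\mathbf{y})\leq u)}]$ orbit by orbit, uses the $\{0,1\}$-valued nature of the indicator to evaluate the within-orbit average $\frac{1}{\#\pi}\sum_{\mathbf{y}\in\pi}\vert\mathbb{I}_{(U(\mathbf{y})\leq u)}-F\vert$ in closed form as $F-2F_U(u\vert\pi)F+F_U(u\vert\pi)$ (splitting $\pi$ into the classes $C_0$ and $C_1$), and then invokes a purpose-built algebraic lemma, $\vert x-y\vert\leq x-2xy+y$ for $x,y\in[0,1]$, to compare this with $\vert F-F_U(u\vert\pi)\vert$. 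You bypass both the closed-form evaluation and the lemma: since $F_U(u\vert\pi)-F$ is the convex combination $\sum_{\mathbf{y}\in\pi}\mathbb{P}(\mathbf{y}\vert\pi)\bigl(\mathbb{I}_{(U(\mathbf{y})\leq u)}-F\bigr)$, the triangle inequality gives $\vert F_U(u\vert\pi)-F\vert\leq\sum_{\mathbf{y}\in\pi}\mathbb{P}(\mathbf{y}\vert\pi)\vert\mathbb{I}_{(U(\mathbf{y})\leq u)}-F\vert$ directly, and the tower-property bookkeeping from Proposition \ref{prop} finishes the argument. In fact your triangle-inequality bound is exactly the paper's Lemma applied at the point $(F, F_U(u\vert\pi))$, since the right-hand side of your bound evaluates to $F-2F_U(u\vert\pi)F+F_U(u\vert\pi)$ for an indicator; so the two proofs establish the same orbit-wise inequality, but yours derives it from a general principle (conditional Jensen for $x\mapsto\vert x\vert$) rather than from an ad hoc computation. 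What your approach buys is generality and uniformity: it works verbatim for any integrable estimator, not just indicators, and, as you note, substituting $x\mapsto x^2$ recovers Proposition \ref{prop2} by the same mechanism. What the paper's approach buys is an explicit expression for the left-hand side $\mad_p$ in terms of $F_U(u\vert\pi)$, which makes the size of the gap between the two dispersions visible. The only point requiring care --- that both MADs are centred at the common mean $F$, which Proposition \ref{prop} guarantees --- you have identified and handled correctly.
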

The proof of Proposition \ref{prop2} and \ref{prop3} are presented in the Appendix \ref{app:estimator}.

Having proved that both dispersion measures are lower for the estimator $F_U(u\vert \pi)$, we conclude that the MCMC proposed in Section \ref{sec:permutation1} gives more reliable estimates of $F_U(u\vert\fiber)$.

\section{Permutation and MCMC sampling}
\label{sec:permutation2}
In this section we carry out a brief analysis of the limit case which we get when we sample just one orbit $\pi$ and we carry out a standard Monte Carlo sampling over $\pi$. If the sampled orbit is $\pi_{\mathbf{y}_{obs}}$, i.e. the one which contains the observed vector $\mathbf{y}_{obs}$, the sampling procedure proposed in Section \ref{sec:permutation1} corresponds to the standard permutation sampling (\cite{pesarin2010permutation}).

We observe that $\mathbf{y}_{obs}$ is an observation sampled from the distribution $p$ and that the corresponding orbit $\pi_{\mathbf{y}_{obs}}$ is an observation sampled from the distribution $p_\pi$, where $p$ and $p_\pi$ are the probability distributions in Proposition \ref{prop}.

Two well-known remarkable properties of the permutation sampling immediately follows from Proposition \ref{prop}. First $F_U(u\vert \pi_{\mathbf{y}_{obs}})$, the cumulative distribution function conditional to the orbit of the observed sample, is an unbiased estimator of $F_U(u\vert \fiber)$, the cumulative distribution function over the fiber $\mathcal{F}_{N,t}$. Secondly, it is an unbiased estimator of $F_U(u\vert \fiber)$ for \emph{any} distribution function $p$, that does not need to be specified. In fact the estimator $F_U(u\vert \pi_{\mathbf{y}_{obs}})$
does not need any expression of $p$ to be computed.

As a simple example consider again the fiber $\mathcal{F}_{3,6}$ in Figure \ref{fig:grafo}. We select $n_1=2$ and $n_2=1$ and we compare the exact cumulative distribution over the fiber $F_U(u\vert \fiber)$ and the cumulative distribution over $\pi=\pi_{(1,2,3)}$, the orbit with highest probability, $F_U(u\vert \pi)$. We get two distributions (Table \ref{tab:finale}) which are considerably close, even if the cardinality of the selected orbit is low ($\#\pi_{(1,2,3)}=6$) compared to the the cardinality of $\mathcal{F}_{3,6}$, which is $28$. However, it is easy to see that some orbits do not give a good approximation of the distribution over $\fiber$. If we refer again to the fiber $\mathcal{F}_{3,6}$ and we consider $\pi_{(2,2,2)}$, we get a cumulative distribution which has only two values, 0 and 1 (Table \ref{tab:finale}). This difference is due to the unequal probabilities of the orbits in $\fiber$ (these probabilities are reported in Table \ref{tab:7orbits}).

\begin{table}[ht]
\centering
\caption{Cumulative distributions of $U$}
\label{tab:finale}    
\begin{tabular}{llllllll}
\hline\noalign{\smallskip}
$u$ & 0&1	&2		&3	&4	&5	&6\\
\noalign{\smallskip}\hline\noalign{\smallskip}
$\mathcal{F}_{3,6}$ &0.001	&0.018	&0.100	&0.320	&0.649	&0.912	&1\\
$\pi_{(1,2,3)}$& 0&0 &0 &0.333&0.667&1&1\\
$\pi_{(2,2,2)}$& 0&0 &0 &0&1&1&1\\
\noalign{\smallskip}\hline
\end{tabular}
\end{table}


\section{Simulation Study}
\label{sec:sim}

We compare the \emph{approximated} conditional cumulative distribution functions obtained using the sampling techniques described in Sections \ref{sec:mcmc}, \ref{sec:permutation1} and \ref{sec:permutation2} with the \emph{exact} conditional cumulative distribution function in the case of Poisson data. A preliminary simulation study is presented in \cite{sis}.

We consider Poisson distributed data: $\mathbf{Y}_1^{(n_1)}=(Y_{1},\ldots, Y_{n_1})$ of size $n_1$ from $\Poi(\mu_1)$ and $\mathbf{Y}_2^{(n_2)}=(Y_{n_1+1},\ldots, Y_{n_1+n_2})$ of size $n_2$ from $\Poi(\mu_2)$.
In this case the exact distribution \eqref{eq:umpu} under $H_0:\mu_1=\mu_2=\mu$ is known to be a binomial distribution with $t$ trials and probability of success $n_1/(n_1+n_2)$.

We consider $9$ scenarios built taking three different sample sizes $(n_1,n_2)$ (Table \ref{tab:samplesize}) and, for each sample size, three different population means $(\mu_1,\mu_2)$ (Table \ref{tab:lambda}).  

\begin{table}[th]

\centering
\caption{Scenario definition}
\label{tab:scenario}
\subfloat[][Sample sizes \label{tab:samplesize}]{
\begin{tabular}{llll}
\hline\noalign{\smallskip}
 & $1$ & $2$ & $3$ \\
\noalign{\smallskip}\hline\noalign{\smallskip}
$n_1$ & $6$ & $10$ & $30$ \\
$n_2$ & $4$ & $15$ & $20$ \\
\noalign{\smallskip}\hline
\end{tabular}
}
\qquad\qquad
\subfloat[][Population means \label{tab:lambda}]{
\begin{tabular}{llll}
\hline\noalign{\smallskip}
 & $1$ & $2$ & $3$ \\
\noalign{\smallskip}\hline\noalign{\smallskip}
$\mu_1$ & $1$ & $1$ & $1$\\
$\mu_2$ & $1$ & $1.5$ & $2$\\
\noalign{\smallskip}\hline
\end{tabular}
}
\end{table}

First we compare how fast the two MCMC procedures converge to the true distribution $F_U(u\vert \fiber)$.
We draw one random sample $\mathbf{y}_{obs}$ for each scenario above, we run both MCMC procedures (the one which samples vectors $\mathbf{y}\in\fiber$ and the one which samples orbits $\pi\in\mathcal{F}_{N,t}^{(\pi)}$) for 5,000 steps and at each step we compute the corresponding estimate of $F_U(u\vert\fiber)$ (the indicator function $\mathbb{I}_{(U(\mathbf{y})\leq u)}$ in the first case and the permutation distribution $F_U(u\vert\pi)$ in the second one).

The number of Monte Carlo permutations for every sampled orbit $\pi$ is given by
\begin{equation}
\label{eq:rule}
r(\pi)=w_{\pi}\cdot 100,000,
\end{equation}
where $w_{\pi}=\#\pi/\#\fiber$ is the weight of the sampled orbit with respect to the total cardinality of the fiber. The two cardinalities are computed as $\#\fiber=\binom{t+N-1}{N-1}$ and $\#\pi$ as in equation \eqref{eq:exact_card_orbit}. This choice allows us to get a suitable total number of permutations and results in a number of Monte Carlo samples proportional to the cardinality of the orbit.

\begin{figure*}[t]
\centering
\subfloat[$n_1=6$, $n_2=4$, $\mu_1=1$, $\mu_2=1$]{
\includegraphics[width=0.5\textwidth]{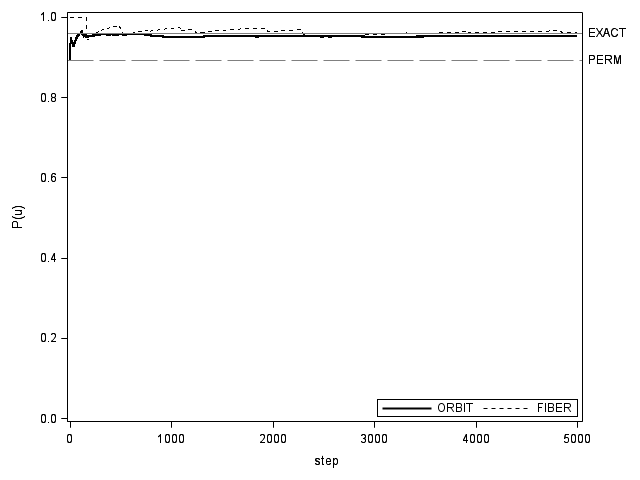}}
\subfloat[$n_1=6$, $n_2=4$, $\mu_1=1$, $\mu_2=1.5$]{
\includegraphics[width=0.5\textwidth]{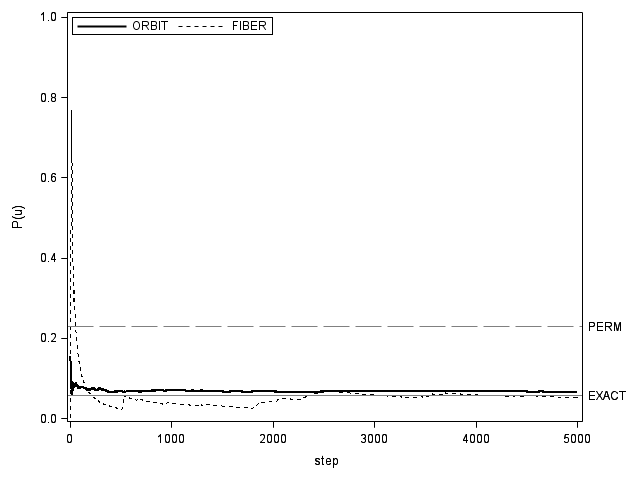}}\\
\subfloat[$n_1=10$, $n_2=15$, $\mu_1=1$, $\mu_2=1$]{
\includegraphics[width=0.5\textwidth]{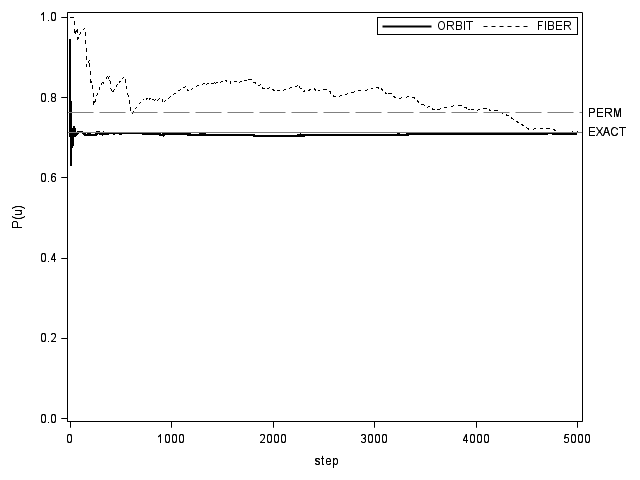}}
\subfloat[$n_1=10$, $n_2=15$, $\mu_1=1$, $\mu_2=1.5$]{
\includegraphics[width=0.5\textwidth]{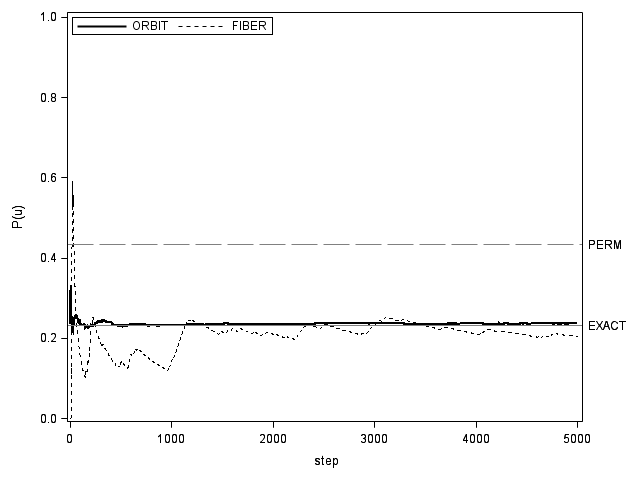}}
\caption{Convergence comparison for 5,000 steps}
\label{fig:cfr}
\end{figure*}

Figure \ref{fig:cfr} shows four examples of the behaviour of the two MCMC procedures.
Much faster convergence to the value computed using the exact cdf (solid horizontal line) is obtained with the MCMC over the orbits of permutations (solid line) compared to the estimated cdfs obtained through the accelerated MCMC sampling over the elements of the fiber (dashed line).
In the same figure, we report $\hat{F}_U(u\vert\pi_{\mathbf{y}_{obs}})$, the Monte Carlo permutation estimate of $F_U(u\vert \fiber)$ (dashed horizontal line), too. The number of Monte Carlo permutations for $\pi_{\mathbf{y}_{obs}}$ is $10,000$.

Now we compare the execution times for 5,000 steps for the three sampling algorithms. We consider $1,000$ samples with $n_1=30$ and $n_2=20$ for each population mean in Table \ref{tab:lambda} and we run both MCMC samplings for 5,000 steps. We report the average execution times (in seconds) for the three sampling procedures (MCMC over the fiber, MCMC over the orbits and standard permutation) in Table \ref{tab:times}.

\begin{table}[h]
\centering
\caption{Average execution times (in seconds) for $n_1=30$ and $n_2=20$ for 5,000 steps}
\begin{tabular}{ccccc}
\hline\noalign{\smallskip}
\multicolumn{2}{c}{Scenario} & MCMC fiber & MCMC orbit & Permutation \\
$\mu_1$&$\mu_2$&&&\\
\noalign{\smallskip}\hline\noalign{\smallskip}
1&1&0.32&10.79&$2.2\cdot10^{-4}$\\
1&1.5&0.30&17.58&$7.0\cdot10^{-5}$\\
1&2&0.32&28.01&$5.9\cdot10^{-5}$\\
\noalign{\smallskip}\hline
\end{tabular}
\label{tab:times}
\end{table}

As expected, the MCMC sampling over the set of orbits $\mathcal{F}_{N,t}^{(\pi)}$ takes more time than the MCMC sampling over $\mathbf{y}\in\fiber$ to perform 5,000 steps. This is due to the fact that every time an orbit $\pi$ is sampled from $\mathcal{F}_{N,t}^{(\pi)}$ we perform a standard Monte Carlo over $\pi$, hence a Monte Carlo sampling corresponds to every step of the sampling over $\mathcal{F}_{N,t}^{(\pi)}$. In any case the maximum observed execution time (28.01 seconds) is very low. Moreover the achieved convergence speed seems to justify the extra-time needed by the MCMC over the orbits. 

Indeed, if we consider the convergence speed (Figure \ref{fig:cfr}), we observe that the MCMC sampling over the orbits, being computationally simple, is very fast and needs less than 1,000 steps to converge to the exact value. On the contrary the MCMC sampling over $\mathbf{y}\in\fiber$ is much less efficient, in fact its convergence to the exact value is not satisfactory even in 5,000 steps. We observe that at least 10,000 steps are needed to have a good convergence for the MCMC over $\mathbf{y}\in\fiber$ (see Figure \ref{fig:cfrbis}), even if this number of steps does not always give a completely satisfactory convergence, as we can see in Figure \ref{fig:3c}.

\begin{figure*}[t]
\centering
\subfloat[$n_1=6$, $n_2=4$, $\mu_1=1$, $\mu_2=1$]{
\includegraphics[width=0.5\textwidth]{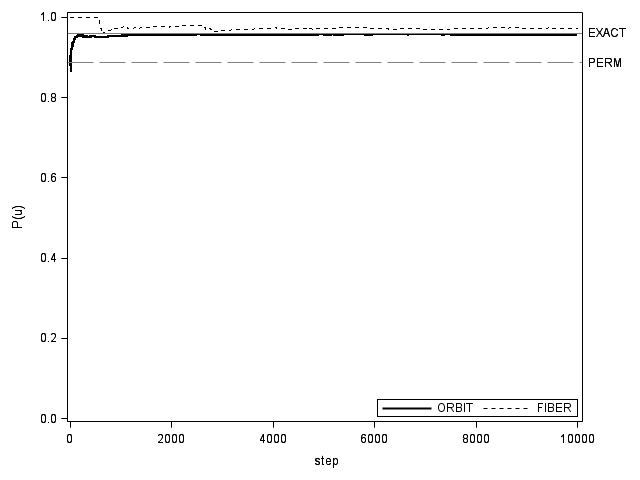}}
\subfloat[$n_1=6$, $n_2=4$, $\mu_1=1$, $\mu_2=1.5$]{
\includegraphics[width=0.5\textwidth]{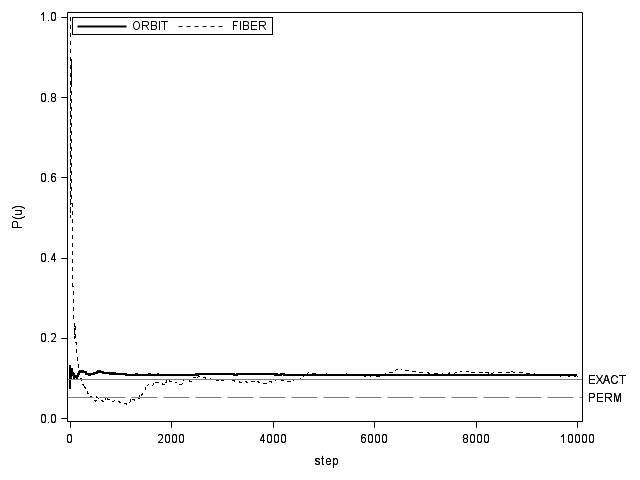}}\\
\subfloat[$n_1=10$, $n_2=15$, $\mu_1=1$, $\mu_2=1$ \label{fig:3c}]{
\includegraphics[width=0.5\textwidth ]{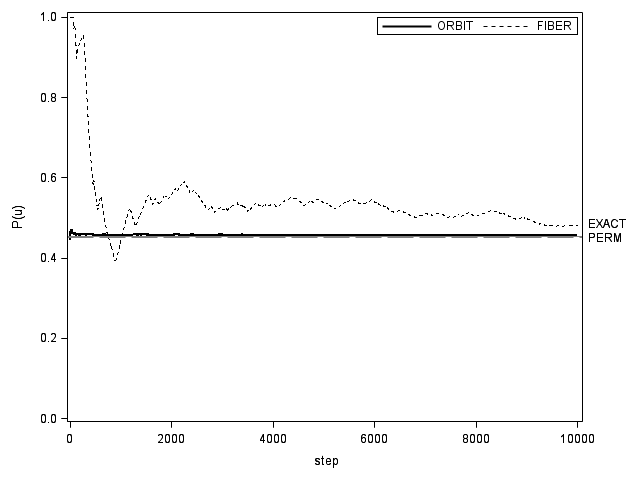}}
\subfloat[$n_1=10$, $n_2=15$, $\mu_1=1$, $\mu_2=1.5$]{
\includegraphics[width=0.5\textwidth]{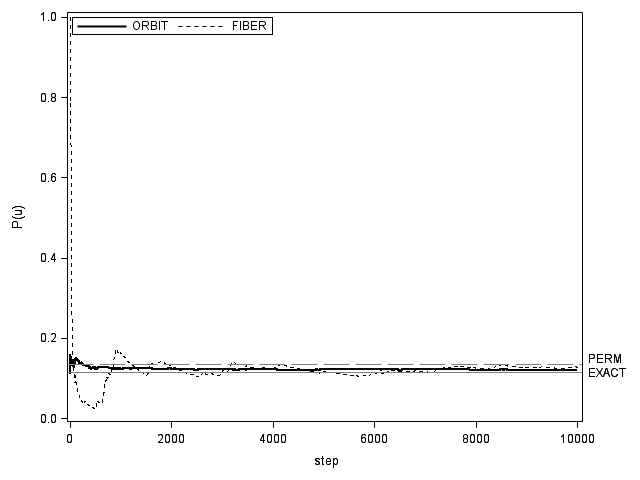}}
\caption{Convergence comparison for 10,000 steps}
\label{fig:cfrbis}
\end{figure*}

To further explore the convergence of the MCMC sampling over the orbits we consider the errors of the approximated distributions obtained with the three sampling algorithms (MCMC over $\fiber$, MCMC over $\mathcal{F}_{N,t}^{(\pi)}$, Monte Carlo permutation in $\pi_{\mathbf{y}_{obs}}$) with respect to the exact value.

Given that we work with Poisson distributed data we know that the exact value is given by
\begin{equation}
\label{eq:exact_poisson}
\mathbb{P}(\Bin(t,\theta_0)\leq u)=\sum_{k=0}^u \binom{t}{k}\theta_0^k(1-\theta_0)^{t-k},
\end{equation}
with $\theta_0=n_1/(n_1+n_2)$.

We consider again the 9 scenarios above and for each scenario we draw 1,000 samples. We perform 1,000 burn-in steps plus 4,000 actual steps for the MCMC sampling over the fiber, while for the MCMC over the orbits we use 250 burn-in steps followed by 850 actual steps. The reduction of the number of steps for the MCMC over the orbits is justified by the fact that this MCMC achieves satisfactory convergence in less steps (see Figure \ref{fig:cfr}).

For each sample we compute the errors of the three estimated distributions (we denote by $\hat{F}_U^{\pi}$ the estimated distribution obtained by sampling orbits and $\hat{F}_U^{\mathbf{y}}$ the estimated distribution obtained by sampling vectors $\mathbf{y}$)
\begin{align*}
&\bias_{\mathbf{y}}=\sum_{k=0}^u \binom{t}{k}\theta_0^k(1-\theta_0)^{t-k}-\hat{F}_U^{\mathbf{y}}(u\vert\fiber),\\
&\bias_{\pi}=\sum_{k=0}^u \binom{t}{k}\theta_0^k(1-\theta_0)^{t-k}-\hat{F}_U^{\pi}(u\vert\mathcal{F}_{N,t}^{(\pi)}),\\
&\bias_{\text{perm}}=\sum_{k=0}^u \binom{t}{k}\theta_0^k(1-\theta_0)^{t-k}-\hat{F}_U(u\vert\pi_{\mathbf{y}_{obs}}).
\end{align*}
We know from Proposition \ref{prop} that the expected value of $\bias_{\mathbf{y}}$, $\bias_{\pi}$ and $\bias_{\text{perm}}$ is 0. 

\begin{table*}[t]
\centering
\caption{Error analysis}
\subfloat[]{
\begin{tabular}{cccc|cccccc}
\hline\noalign{\smallskip}
\multicolumn{4}{c}{Scenario}& \multicolumn{3}{c}{Mean}&\multicolumn{3}{c}{Range}\\
$n_1$&$n_2$&$\mu_1$&$\mu_2$&$\fiber$ & $\mathcal{F}_{N,t}^{(\pi)}$ & $\pi_{\mathbf{y}_{obs}}$&$\fiber$ & $\mathcal{F}_{N,t}^{(\pi)}$ & $\pi_{\mathbf{y}_{obs}}$\\
\noalign{\smallskip}\hline\noalign{\smallskip}
6 &	4 &	1 &	1 & 0 & -0.004 & 0.007 & 0.217 & 0.062 & 0.54 \\ 
6 &	4 &	1 &	1.5& -0.002 & 0.002 & 0.011 & 0.205 & 0.057 & 0.436 \\ 
6 &	4 &	1 &	2& -0.001 & 0.005 & 0.015 & 0.212 & 0.05 & 0.442 \\ 
10 &	15 &	1 &1 &0.005 & -0.001 & -0.003 & 0.339 & 0.052 & 1.209 \\ 
10 &	15 &	1&1.5& 0 & 0.004 & 0.003 & 0.342 & 0.052 & 1.756 \\ 
10 &	15 &	1&2& 0 & 0.003 & 0.008 & 0.325 & 0.064 & 1.685 \\ 
30 &	20 &	1&1& 0 & -0.001 & 0.009 & 0.586 & 0.066 & 1.9 \\ 
30 &	20 &	1&1.5& 0.002 & 0.003 & 0.013 & 0.502 & 0.086 & 1.762 \\ 
30 &	20 &	1&2& 0 & 0.001 & -0.004 & 0.304 & 0.065 & 1.748 \\ 

\noalign{\smallskip}\hline
\end{tabular}}\\
\subfloat[]{
\begin{tabular}{cccc|cccccc}
\hline\noalign{\smallskip}
\multicolumn{4}{c}{Scenario}& \multicolumn{3}{c}{Std Dev}& \multicolumn{3}{c}{MAD}\\
$n_1$&$n_2$&$\mu_1$&$\mu_2$&$\fiber$ & $\mathcal{F}_{N,t}^{(\pi)}$ & $\pi_{\mathbf{y}_{obs}}$&$\fiber$ & $\mathcal{F}_{N,t}^{(\pi)}$ & $\pi_{\mathbf{y}_{obs}}$\\
\noalign{\smallskip}\hline\noalign{\smallskip}
6 &	4 &	1 &	1 & 0.031 & 0.01 & 0.051 & 0.023 & 0.009 & 0.036 \\ 
6 &	4 &	1 &	1.5& 0.028 & 0.01 & 0.051 & 0.021 & 0.008 & 0.037 \\ 
6 &	4 &	1 &	2& 0.026 & 0.008 & 0.05 & 0.018 & 0.007 & 0.038 \\ 
10 &	15 &1&	1& 0.049 & 0.008 & 0.072 & 0.037 & 0.006 & 0.038 \\ 
10 &	15 &1&	1.5& 0.043 & 0.007 & 0.129 & 0.031 & 0.006 & 0.065 \\ 
10 &	15 &1&	2& 0.03 & 0.007 & 0.148 & 0.018 & 0.005 & 0.067 \\ 
30 &	20 &	1&1& 0.074 & 0.009 & 0.208 & 0.056 & 0.007 & 0.121 \\ 
30 &	20 &	1&1.5& 0.053 & 0.009 & 0.234 & 0.035 & 0.007 & 0.13 \\ 
30 &	20 &	1&2& 0.024 & 0.006 & 0.121 & 0.011 & 0.003 & 0.037 \\ 

\noalign{\smallskip}\hline
\end{tabular}
}
\label{tab:bias}
\end{table*}

In Table \ref{tab:bias} we report some statistics about the errors observed for 1,000 random samples drawn for each scenario. We observe that on average the errors are close to 0 for every sampling procedure, as expected.
Then we consider the variability of the obtained errors. The range of variation of the three errors is considerably different: the MCMC sampling on the orbits of permutations gives a range which is one order of magnitude less than the others.  
We also report the standard deviation and the mean absolute deviation (MAD). Also for these statistics it is easy to see that the distribution $\hat{F}_U^{\pi}(u\vert\mathcal{F}_{N,t}^{(\pi)})$, obtained by MCMC sampling from $\mathcal{F}_{N,t}^{(\pi)}$, is the one with smaller variation for every scenario with both standard deviation and MAD of order $10^{-3}$.

The standard Monte Carlo permutation distribution is the one which gives the highest values for all the variability measures (at least one order of magnitude bigger than those obtained by $\hat{F}_U^{\pi}(u\vert\mathcal{F}_{N,t}^{(\pi)})$). This result is not unexpected, given that the standard Monte Carlo permutation over $\pi_{\mathbf{y}_{obs}}$ corresponds to the first step of the MCMC sampling over $\fiber^{(\pi)}$.

Both standard deviation and MAD for the distribution $\hat{F}_U^{\mathbf{y}}(u\vert\fiber)$ are of order $10^{-2}$, hence the MCMC sampling over $\fiber$ gives approximations whose goodness is half-way between the others.

These results are consistent with Propositions \ref{prop2} and \ref{prop3} in Section \ref{sec:compare}, and are confirmed by Figure \ref{fig:bias}, in which the histograms of the absolute errors for some of the scenario in Table \ref{tab:scenario} are shown.
\begin{figure*}[t]
\centering
\subfloat[$n_1=6$, $n_2=4$, $\mu_1=1$, $\mu_2=1$]{
\includegraphics[width=0.5\textwidth]{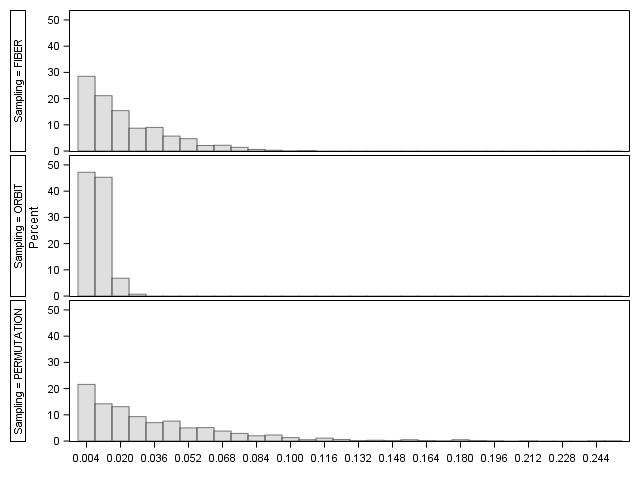}}
\subfloat[$n_1=6$, $n_2=4$, $\mu_1=1$, $\mu_2=1.5$]{
\includegraphics[width=0.5\textwidth]{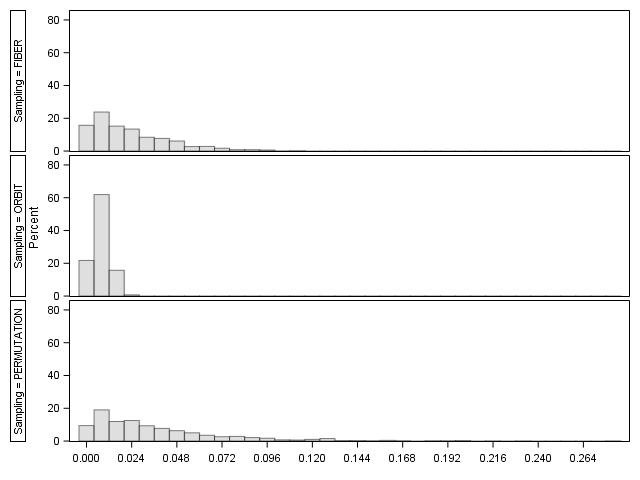}}\\
\subfloat[$n_1=10$, $n_2=15$, $\mu_1=1$, $\mu_2=1$]{
\includegraphics[width=0.5\textwidth]{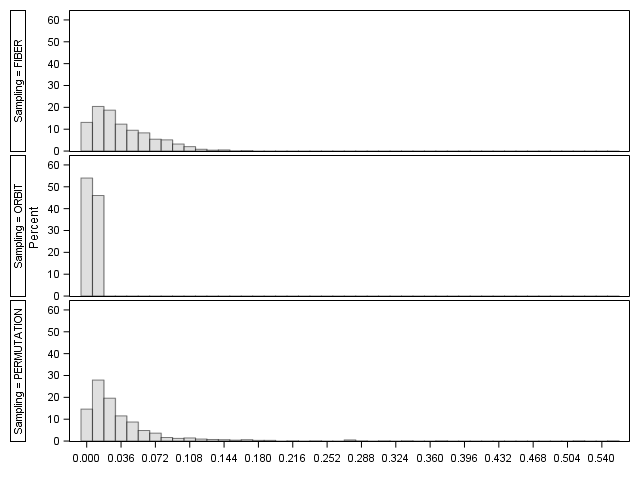}}
\subfloat[$n_1=10$, $n_2=15$, $\mu_1=1$, $\mu_2=1.5$]{
\includegraphics[width=0.5\textwidth]{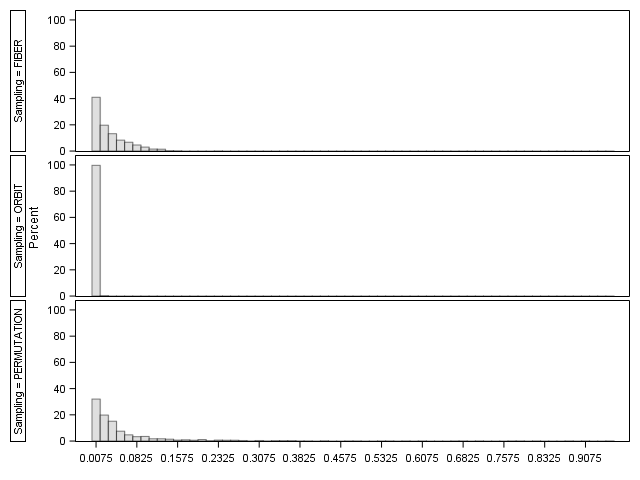}}
\caption{Observed absolute values of $\bias_{\mathbf{y}}$, $\bias_{\pi}$ and $\bias_{\text{perm}}$}
\label{fig:bias}
\end{figure*}

If we compare the execution times for the three sampling procedures (Table \ref{tab:times_full}) we observe that, as expected, despite the smaller number of steps (1,100 in total), the MCMC over $\fiber^{(\pi)}$ has the highest execution time, due to the above mentioned Monte Carlo sampling over the orbits.
Nevertheless the average time needed to perform one run of the MCMC over the orbits is always less than 7 seconds on a standard laptop, which makes this sampling procedure manageable in most real-world situations. 

Taking a closer look to the results obtained by the two MCMC procedures (the statistics on errors in Table \ref{tab:bias} and their execution times in Table \ref{tab:times_full}), we observe that both procedures have competitive execution times, but only in the MCMC over the orbits we can make use of the estimator $F_U(u\vert\pi)$, which has better properties in terms of dispersion, as stated in Proposition \ref{prop2} and \ref{prop3}, because this estimator is built exploiting the partition of the sample space $\fiber$ into orbits of permutations.

\begin{table}[h]
\centering
\caption{Average execution times (in seconds)}
\begin{tabular}{cccc|ccc}
\hline\noalign{\smallskip}
\multicolumn{4}{c}{Scenario} & $\fiber$ & $\mathcal{F}_{N,t}^{(\pi)}$ & $\pi_{\mathbf{y}_{obs}}$ \\
$n_1$&$n_2$&$\mu_1$&$\mu_2$&&&\\
\noalign{\smallskip}\hline\noalign{\smallskip}
6 &	4 &	1 &	1 & 0.105 &4.173&$5.4\cdot 10^{-3}$\\
6 &	4 &	1 &	1.5& 0.111&2.441&$3.5\cdot 10^{-3}$\\
6 &	4 &	1 &	2& 0.118&1.479&$2.3\cdot 10^{-3}$\\ 
10 &	15 &1&	1&0.113&1.262&$1.3\cdot 10^{-3}$\\ 
10 &	15 &1&	1.5& 0.123&1.062&$3.8\cdot 10^{-4}$\\ 
10 &	15 &1&	2&0.133&1.529&$1.1\cdot 10^{-4}$\\ 
30 &	20 &	1&1&0.122&2.751&$1.9\cdot 10^{-4}$\\ 
30 &	20 &	1&1.5&0.123&4.282&$6.2\cdot 10^{-5}$\\ 
30 &	20 &	1&2& 0.127&6.651&$3.2\cdot 10^{-5}$\\ 
\noalign{\smallskip}\hline
\end{tabular}
\label{tab:times_full}
\end{table}

Thus the MCMC over the orbits, despite the highest execution time, allows us to use an estimator which is theoretically more reliable. The consequences of this fact are visible in Table \ref{tab:bias}, indeed range of variation, standard deviation and MAD for $E_\pi$ are the lowest, as we observed earlier.

Additionally, with small changes in the proposed code, it is possible to reduce the computational times listed in Table \ref{tab:times_full} for the MCMC over $\fiber^{(\pi)}$:
\begin{enumerate}
\item instead of using the rule $r(\pi)$ in equation \eqref{eq:rule}, which requires the computation of the cardinality $\#\pi$ for every sampled orbit $\pi$, one could fix a priori the number of Monte Carlo replications;
\item the number of burn-in steps can be reduced, as suggested by Figure \ref{fig:cfr} and \ref{fig:cfrbis}.
\end{enumerate}

The second point is justified by observing that the MCMC sampling which exploits the partition into orbits of permutations can be seen as a \emph{super-accelerated} version of the standard algorithm proposed by \cite{sturmfels}. In fact, both the standard version and the accelerated one described in Section \ref{sec:mcmc} need some steps to drift from $\mathbf{y}_{obs}$ and reach stationarity (the so-called burn-in steps) because at every step one move only is selected, hence every step corresponds to moving in one direction, either by one jump, for the standard version, or by more jumps, for the accelerated version.

On the contrary, performing the MCMC over the orbits we jump from one permutation orbit to another, allowing for much wider jumps which make the underlying Markov chain losing the dependence from the starting state $\mathbf{y}_{obs}$ earlier and result in a faster convergence to the stationary distribution.
Moreover the number of vectors sampled at each step of the three MCMC algorithms is significantly different:
for the standard and accelerate versions only one vector $\mathbf{y}$ is sampled at each step, while for the MCMC over the orbits the number of vectors sampled at each step depends on the Monte Carlo replications $r(\pi)$ sampled from each orbit $\pi$.

These two characteristics of the MCMC over the orbits clearly speed up the convergence to the stationary distribution.

As a simple example let us consider the fiber $\mathcal{F}_{3,6}$ in Figure \ref{fig:grafo}. Suppose $\mathbf{y}_{obs}=(6,0,0)$, the vertex on the top left. If we perform one step of the standard MCMC algorithm over $\mathcal{F}_{3,6}$ we can move towards $(5,1,0)$ or $(5,0,1)$, depending on the selected move ($\mathbf{m}_1$ and $\mathbf{m}_2$ in equation \eqref{eq:b3} respectively). Hence we have 2 possible jumps, one for each move, and we can reach 2 elements of $\mathcal{F}_{3,6}$.

If we perform one step of the accelerated MCMC algorithm over $\mathcal{F}_{3,6}$ we can move towards any of the vectors $(5,1,0) \ldots (0,6,0)$ on the right or any of the vectors $(5,0,1) \ldots (0,0,6)$ below, depending on the selected move ($\mathbf{m}_1$ and $\mathbf{m}_2$ in equation \eqref{eq:b3} respectively), but not to any of the 15 vectors on the bottom right of $(6,0,0)$ (e.g $(4,1,1)$ or $(3,1,2)$). Hence we have 12 possible jumps, 6 for each move, and we can reach 12 elements of $\mathcal{F}_{3,6}$.

Let us now consider the Figure \ref{fig:grafo_orbite} where the graph whose nodes are the orbits in $\mathcal{F}_{3,6}^{(\pi)}$ is shown. It is evident that this graph is simpler than the one shown in Figure \ref{fig:grafo}, i.e. the graph over $\mathcal{F}_{3,6}$. There are 7 nodes instead of 28 and 9 edges instead of 42. If we perform one step of the MCMC over the set of orbits $\mathcal{F}_{3,6}^{(\pi)}$ contained in $\mathcal{F}_{3,6}$, we can jump from $\pi_{(6,0,0)}$ to $\pi_{(0,1,5)}$ or $\pi_{(0,2,4)}$ or $\pi_{(0,3,3)}$, depending on the selected move ($\mathbf{m}_7^{(\pi)}$, $\mathbf{m}_8^{(\pi)}$ and $\mathbf{m}_9^{(\pi)}$ in equation \eqref{eq:b6} respectively). Hence we have 3 possible jumps, one for each move. However if we look at the vectors contained in each orbit we observe that starting from $\pi_{(6,0,0)}$ we can reach 15 elements of $\mathcal{F}_{3,6}$.

\begin{figure*}
\begin{center}

\resizebox{0.5\textwidth}{!}{%
\begin{tikzpicture}[-latex,auto,node distance=2cm and 2cm,on grid, 
	semithick, state/.style={circle,top color=white,bottom color=white, 
  	draw, black, text=black,minimum width=1cm}] 
%
%
  	
\node[state] (AA) {$\pi_{033}$}; 
\node[state] (AB) [right=of AA] {$\pi_{123}$};
\node[state] (AC) [right=of AB] {$\pi_{024}$};
\node[state] (AD) [right=of AC] {$\pi_{222}$};

\node[state] (BB) [below=of AB] {$\pi_{006}$};
\node[state] (BC) [right=of BB] {$\pi_{015}$};
\node[state] (BD) [right=of BC] {$\pi_{114}$};

\path[-] (AA) edge node[above]{$\mathbf{m}^{(\pi)}_2$} (AB);
\path[-] (AB) edge node[above]{$\mathbf{m}^{(\pi)}_3$} (AC);
\path[-] (AC) edge node[above]{$\mathbf{m}^{(\pi)}_4$} (AD);
\path[-] (AA) edge node[above=3 pt]{$\mathbf{m}^{(\pi)}_9$} (BB);
\path[-] (AB) edge node[above=3 pt]{$\mathbf{m}^{(\pi)}_6$} (BC);
\path[-] (AC) edge node[above=3 pt]{$\mathbf{m}^{(\pi)}_1$} (BD);
\path[-] (BB) edge node[below]{$\mathbf{m}^{(\pi)}_7$} (BC);
\path[-] (BC) edge node[below]{$\mathbf{m}^{(\pi)}_5$} (BD);
\path[-] (BB) edge node[below=3 pt]{$\mathbf{m}^{(\pi)}_8$} (AC);

\end{tikzpicture}
}
\end{center} 
\caption{Graph on the orbits of $\mathcal{F}_{N,t}^{(\pi)}$ with $t=6$ and $N=3$}
\label{fig:grafo_orbite}
\end{figure*}
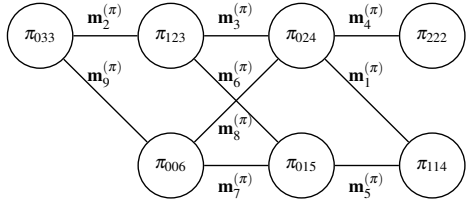

In conclusion, the simulation study provides strong evidence that the MCMC sampling over the set of orbits $\mathcal{F}_{N,t}^{(\pi)}$ (Section \ref{sec:permutation1}) gives estimates of the exact distribution in equation \eqref{eq:exact_poisson} which are considerably more reliable than those obtained by sampling vectors from $\fiber$ (Section \ref{sec:mcmc}), and at the same time the execution time required is still manageable. 

\paragraph{Computational Details}
The simulation study presented in this section was implemented in SAS/IML\textregistered.
The software code is available upon request.
We performed the analysis using a standard laptop (CPU Intel core 2 Duo T6570 CPU 2.10GHz
2.10GHz, RAM 4GB).
\section{Conclusions and Further Perspectives}
\label{sec:conclusion}

In this work we presented two Markov Chain Monte Carlo procedures to sample from the distribution in equation \eqref{eq:umpu} and perform approximated conditional tests.
These sampling procedures can be used to perform uniformly most powerful unbiased tests on the observed data, and are particularly valuable when a computational-friendly form of distribution (\ref{eq:umpu}) cannot be found or when the enumeration of the elements of the conditional sample space is infeasible, since the convergence to (\ref{eq:umpu}) is granted by the Metropolis-Hastings theorem \citep{casella}.

Our main contribution is a 2-step sampling algorithm (Section \ref{sec:compare}) which exploits the properties of the orbits of permutations to better approximate the exact distribution in equation \eqref{eq:umpu} under $H_0$. Specifically, this algorithm grants a faster convergence to the exact distribution if compared to the standard MCMC algorithm proposed by \cite{sturmfels}. At the same time it gives more reliable estimates in a reasonable computational time.

Furthermore the MCMC sampling procedure based on orbits of permutations establishes a link between standard permutation and algebraic-statistics-based sampling that, to the best of our knowledge, is unknown. 

In addition the sampling strategy presented in Section \ref{sec:permutation1} can be used to perform other types of hypothesis tests. We consider possible extensions to N-way ANOVA models, see also \cite{aokitakemura}. First we focus on the 1-way ANOVA model with $k\geq2$ levels each one with $n_i \geq 1, i=1,\ldots,k$ replications. As an example we consider $k=3$. The transpose of the full-rank design matrix $\mathbf{X}$ is
\[
\mathbf{X}^T=\begin{bmatrix*}[r]
1&\ldots&1&1&\ldots&1&1&\ldots&1\\
1&\ldots&1&0&\ldots&0&-1&\ldots&-1\\
0&\ldots&0&1&\ldots&1&-1&\ldots&-1\\
\end{bmatrix*}.\]
If we are interested in testing $$H_0:\beta_1=\ldots=\beta_{k-1}=0\quad \text{vs}\quad H_1:\text{at least one $\beta$ is not 0},$$ the nuisance parameter of the test is $\beta_0$. Given that the sufficient statistic for $\beta_0$ under $H_0$ is $T=\sum_{i=1}^{n_1+\ldots+n_k} Y_i$, to perform an exact test we sample from the fiber $\fiber$ (the same defined in equation \eqref{eq:fiber}) with the MCMC presented in Section \ref{sec:permutation1}. In this case the test statistic is the usual ANOVA F-statistic $\text{MS}_{\text{groups}}/\text{MS}_{\text{residuals}}$.

Next we focus on N-way ANOVA models with N$\geq2$. If we consider $\psi(\mu_i)=\beta_0$ as the null model, the set we sample from under $H_0$ is again the fiber $\fiber$.
On the contrary if we consider null models with at least one effect or interaction, the set we need to sample from does not coincide with $\fiber$.
We show the two-way ANOVA model for 2-level factors ($k_j=2$, $j=1,2$) with no interaction $\psi(\mu_i)=\beta_0+\beta_1x_{i1}+\beta_2x_{i2}$ as an example. We consider the transpose of the full-rank design matrix $\mathbf{X}$
\[
\mathbf{X}^T=\begin{bmatrix*}[r]
1&\ldots&1&1&\ldots&1&1&\ldots&1&1&\ldots&1\\
1&\ldots&1&1&\ldots&1&-1&\ldots&-1&-1&\ldots&-1\\
1&\ldots&1&-1&\ldots&-1&1&\ldots&1&-1&\ldots&-1\\
\end{bmatrix*},\]
with $x_{i2}=1$ if $1\leq i\leq k_1< n_1$ or $n_1+1\leq i\leq k_2< n_1+n_2$, $x_{i2}=-1$ if $k_1+1\leq i\leq n_1$ or $k_2+1\leq i\leq n_1+n_2$.
The hypotheses are $H_0: \beta_2=0$ against $H_1: \beta_2\neq0$, hence $\beta_0$ and $\beta_1$ are the nuisance parameters.
The sufficient statistic for the nuisance parameters $\begin{pmatrix}\beta_0\\\beta_1\end{pmatrix}$ is $\mathbf{T}=\begin{pmatrix}
\sum_{i=1}^N Y_i\\
\sum_{i=1}^{n_1}Y_i-\sum_{i=1}^{n_1+n_2}Y_i
\end{pmatrix}$, and the corresponding conditional sample space is
\begin{equation}
\label{eq:second_ext}
\left\lbrace (Y_1,\dots,Y_{n_1+n_2})\in \mathbb{N}^{n_1+n_2}: \begin{pmatrix*}[r]
1&\ldots&1&1&\ldots&1\\
1&\ldots&1&-1&\ldots&-1\\
\end{pmatrix*}\mathbf{Y}= \mathbf{t}\right\rbrace.
\end{equation}

The set \eqref{eq:second_ext} can be partitioned into subsets containing only permutations and thus, thanks to the inferentially equivalence in equation \eqref{eq:inf_equi}, we can built a sampling procedure which uses both MCMC and Monte Carlo sampling as in Section \ref{sec:permutation1}. Nevertheless the subsets containing permutations will not coincide with the orbits $\pi$. This follows from the second condition in \eqref{eq:second_ext}, in fact the sample space \eqref{eq:second_ext} does not contain any full orbit $\pi$, but instead the subset of within-groups permutations $\omega$ for every orbit $\pi$.

The more efficient permutation-based sampling strategy can be extended to more complicated designs, provided that the sample size $N$ allows for a partition $\Omega=\lbrace\omega\rbrace$ where at least some $\omega$s contain a sufficiently large number of points. An example of an experiment in which the sample size $N$ does not allow for the permutation-based sampling strategy is presented in \cite{aokitakemura}.

\appendix
\section{Properties of the Graph}
\label{app:graph}
In this appendix we focus on the description of the graph $\mathcal{G}$ induced by the Markov basis over the fiber $\mathcal{F}_{N,t}$.

First we show how to obtain the formula in equation (\ref{eq:edges}) to compute the number of edges $\vert E\vert$ of $\mathcal{G}$. Then we will prove that the graph is bipartite.

The computation of the number of edges is based on the fact that every edge corresponds to the addition/subtraction of a move $\mathbf{m}$ from the basis $\mathcal{B}_{N}$, therefore we just have to check which moves are admissible for a generic vertex $\mathbf{y}\in\mathcal{F}_{N,t}$.

The set of vertices can be divided into three subsets:
\begin{itemize}
\item the internal vertices, i.e. the vectors with no component equal to 0;
\item the vertices corresponding to vectors with $y_1\neq0$ and $1\leq z<N$ components equal to zero;
\item the vertices corresponding to vectors with $y_1=0$ and $1\leq z^*<N-1$ additional components equal to zero.
\end{itemize}

Now consider the first subset, i.e. the internal vertices. This set has cardinality $\binom{t-1}{N-1}$ and for each vertex in this set every move $\mathbf{m}_K \ 1\leq K\leq N-1$ with every sign $\varepsilon=\pm1$ is admissible. This is a consequence of the absence of entries equal to 0, which means that we can add or subtract 1 from every entry.
Thus each vertex in this set has $2(N-1)$ edges.

Secondly, consider the set of vertices with $z$ zero components and $y_1\neq0$; this set has cardinality given by the number of possible vectors with sum of entries equal to $t$ and $z$ zero components $\binom{t-1}{N-1-z}$ times the possible positions for the $z$ zero components $\binom{N-1}{z}$. For the vertices in this set the $2(N-1-z)$ moves which do not involve the $z$ zero components are admissible and within the ones which involve the zero components only the $z$ moves with $\varepsilon=+1$ are admissible.
Therefore every vertex in this set has $2(N-1-z)+z=2N-2-z$ edges.

Finally, consider the set of vertices with $y_1=0$ and $z^*$ additional null components and denote the total number of zero components $z=z^*+1$. The cardinality of this set is given by the product between the number of possible vectors with sum of entries equal to $t$ and $z$ zero components $\binom{t-1}{N-1-z}$ and the possible positions for the $z^*$ additional zero components $\binom{N-1}{z^*}=\binom{N-1}{z-1}$.
For the vertices in this set $\varepsilon=+1$ is the only admissible sign and if $\varepsilon=+1$ the moves involving the $z^*$ zero components are not admissible; therefore each vertex in this set has $N-1-z^*=N-z$ edges.

Thus the total number of edges is given by the sum of these three terms
\begin{align*}
&2(N-1)\binom{t-1}{N-1}+\\
&\sum_{z=1}^{N-1}(2N-2-z)\binom{t-1}{N-1-z} \binom{N-1}{z}+\\
&\sum_{z=1}^{N-1}(N-z)\binom{t-1}{N-1-z}\binom{N-1}{z-1}
\end{align*}
divided by two (because by counting the edges of each vertex we count the same edge twice), which results in the formula in equation (\ref{eq:edges}).

To prove that $\mathcal{G}$ is bipartite we observe that it is not possible to return to the starting vector by an odd sequence of moves: consider the first component $y_1$ of a generic vector $\mathbf{y}\in\mathcal{F}_{N,t}$ and a generic path of moves.
Every move acts on $y_1$ with a $+1$ or a $-1$. To come back to $y_1$ the sequence of $+1$ and $-1$ has to be even. This proves that $\mathcal{G}$ has no cycle of odd length, hence the graph is bipartite.

\section{Markov basis for $\mathcal{F}_{N,t}^{(\pi)}$}
\label{app:mb}
In this section we give a way to build the Markov basis for $\mathcal{F}_{N,t}^{(\pi)}$ given a positive integer $t$.

First we recall the definition of Markov basis (\cite{sturmfels} and \cite{aokitakemura}):
\begin{definition}
A Markov basis for the $2\times (t+1)$ matrix $A_t^{(\pi)}=\begin{pmatrix}
0&1&\ldots&t\\
1&1&\ldots&1
\end{pmatrix}$ is a finite set of moves $\mathcal{B}_t^{(\pi)}=\lbrace\mathbf{m}_1^{(\pi)},\ldots,\mathbf{m}_K^{(\pi)}\rbrace$ such that
\begin{enumerate}
\item $\mathbf{m}_i^{(\pi)}$ belongs to the integer kernel of $A_t^{(\pi)}$ for every $1\leq i\leq K$, i.e. $A_t^{(\pi)}\mathbf{m}_i^{(\pi)}=\mathbf{0}$ and $\mathbf{m}_i^{(\pi)}\in\mathbb{Z}^{t+1}$ for every $1\leq i\leq K$;
\item for any $\mathbf{f}_1,\mathbf{f}_2$ such that $A_t^{(\pi)}\mathbf{f}_1=A_t^{(\pi)}\mathbf{f}_2=\mathbf{b}$ with $\mathbf{b}\geq 0$, there exist $\Delta>0$, $(\varepsilon_1,\mathbf{m}_{i_{1}}^{(\pi)}),\ldots,(\varepsilon_{\Delta},\mathbf{m}_{i_{\Delta}}^{(\pi)})$ with $\varepsilon_j \in\lbrace-1,+1\rbrace$ and $\mathbf{m}_{i_{j}}^{(\pi)}\in\mathcal{B}_t^{(\pi)}$ for every $j=1,\ldots,\Delta$ satisfying
\begin{gather}
\label{eq:definition}
\mathbf{f}_1=\mathbf{f}_2+\sum_{j=1}^\Delta \varepsilon_j\mathbf{m}_{i_{j}}^{(\pi)}\quad \text{and}\\
\mathbf{f}_2+\sum_{j=1}^\delta \varepsilon_j\mathbf{m}_{i_{j}}^{(\pi)} \geq 0 \text{ for } \delta=1,\ldots,\Delta.
\end{gather}
\end{enumerate}
\end{definition} 

As stated by \cite{sturmfels} and \cite{aokitakemura} the second condition implies that the graph induced over the fiber $\mathcal{F}_{N,t}^{(\pi)}$ by the Markov basis $\mathcal{B}_t^{(\pi)}$ is connected.

In our case we are interested in a Markov Basis for the specific fiber $\mathcal{F}_{N,t}^{(\pi)}=\lbrace \mathbf{f}: A_t^{(\pi)}\mathbf{f}=\begin{pmatrix}
t\\
N
\end{pmatrix} \rbrace$. The following Proposition \ref{prop:basis} constructs a Markov Basis for $\mathcal{F}_{N,t}^{(\pi)}$. With a slight abuse of notation we still denote by $\mathcal{B}_t^{(\pi)}$ such a basis. We denote as $\lfloor x \rfloor$ the floor of $x$, $\lfloor x\rfloor =\max\{m\in \mathbb {Z} \mid m\leq x\}$.

\begin{proposition}
\label{prop:basis}
For any integer $t$ one can build a Markov basis $\mathcal{B}_t^{(\pi)}$ for the fiber $\mathcal{F}_{N,t}^{(\pi)}$ considering the moves $\mathbf{m}_{k,i}$ which are built as follows: for every $2\leq k\leq t$ and for every $1\leq i\leq\lfloor k/2\rfloor$ the $t+1$ vector $\mathbf{m}_{k,i}$ is constructed as follows
\begin{enumerate}
	\item set all the components of $\mathbf{m}_{k,i}$ equal to zero;
	\item set $(\mathbf{m}_{k,i})_0=-1$ and $(\mathbf{m}_{k,i})_k=-1$;
\item set $(\mathbf{m}_{k,i})_i=1$; 
\item set $(\mathbf{m}_{k,i})_{k-i}=(\mathbf{m}_{k,i})_{k-i}+1$.
\end{enumerate}

\begin{proof}
First we observe that for any $\mathbf{m} \in \mathcal{B}_t^{(\pi)}$ its components $(\mathbf{m})_i, i=0,\ldots,t$ are in $\lbrace -1,0,1,2\rbrace$. It follows that $\mathbf{m} \in\mathbb{Z}^{t+1}$. We also observe that $m_i=2$ if and only if $i=k-i$, that is if $i=k/2$.

Secondly, $A_t^{(\pi)}\mathbf{m}=\mathbf{0}$ because
\begin{gather*}
( 0,1,\ldots,t)\mathbf{m}=0\cdot m_0+i\cdot m_i+(k-i)\cdot m_{k-i}+k\cdot m_k=\\
=0\cdot (-1)+i\cdot 1 +(k-i)\cdot 1+k\cdot (-1)=0
\end{gather*}
and 
\begin{gather*}
( 1,1,\ldots,1)\mathbf{m}=1\cdot m_0+1\cdot m_i+1\cdot m_{k-i}+1\cdot m_k=\\
=1\cdot (-1)+1\cdot 1 +1\cdot 1+1\cdot (-1)=0.
\end{gather*}

Thirdly, we prove that the points of $\mathcal{F}_{N,t}^{(\pi)}$ are connected by the moves of $\mathcal{B}_t^{(\pi)}$ by induction over $t$. Preliminarly, we observe that $\mathcal{B}_t^{(\pi)}$ can be considered as the disjoint union of $\mathcal{B}_{t,0}^{(\pi)}$ and  $\mathcal{B}_{t,1}^{(\pi)}$ where $\mathcal{B}_{t,0}^{(\pi)}=\lbrace (x_0,\ldots,x_t) \in \mathcal{B}_t^{(\pi)}: x_t=0 \rbrace$ and $\mathcal{B}_{t,1}^{(\pi)}=\lbrace (x_0,\ldots,x_t)  \in \mathcal{B}_t^{(\pi)}: x_t=-1 \rbrace$. $\mathcal{B}_{t,0}^{(\pi)}$ is obtained with $2 \leq k < t$ and $\mathcal{B}_{t,1}^{(\pi)}$ is obtained with $k=t$.

By construction it holds that
$\mathcal{B}_{t,0}^{(\pi)}=\lbrace (x_0,\ldots,x_{t-1},0): (x_0,\ldots,x_{t-1}) \in \mathcal{B}_{t-1}^{(\pi)}\rbrace$.
  
\begin{itemize}
\item For $t=1$, we have $\mathcal{F}_{N,1}^{(\pi)}=\left\{(N-1,1)\right\}$ and $\mathcal{B}_1^{(\pi)}=\emptyset$.
\item For $t=2$, we have $\mathcal{F}_{N,2}^{(\pi)}=\left\{(N-1,0,1),(N-2,2,0)\right\}$ and $\mathcal{B}_2^{(\pi)}=\left\{(-1,2,-1)\right\}$. It follows that the two points of $\mathcal{F}_{N,2}^{(\pi)}$ are connected by
the move of $\mathcal{B}_2^{(\pi)}$.
\item Let us now suppose that $\mathcal{B}_t^{(\pi)}$ connects $\mathcal{F}_{N,t}^{(\pi)}$ and let us prove that $\mathcal{B}_{t+1}^{(\pi)}$ connects $\mathcal{F}_{N,t+1}^{(\pi)}$. We observe that $\mathcal{F}_{N,t+1}^{(\pi)}$ is the disjoint union of the sets $\tilde{\mathcal{F}}_{N,t+1}^{(\pi)}$ and $\left\{(N-1,0,\ldots,0,1)\right\}$, where $\tilde{\mathcal{F}}_{N,t+1}^{(\pi)}$ contains the points $(x_0,x_1+1,x_2,\ldots,x_t,0)$ with $(x_0,x_1,x_2,\ldots,x_t) \in \mathcal{F}_{N,t}^{(\pi)}$. It is easy to verify that $\mathcal{B}_{t+1,0}^{(\pi)}$ connects all the points of $\tilde{\mathcal{F}}_{N,t+1}^{(\pi)}$ and that $\mathcal{B}_{t+1,1}^{(\pi)}$ connects the point $\left\{(N-1,0,\ldots,0,1)\right\}$ to the points of $\tilde{\mathcal{F}}_{N,t+1}^{(\pi)}$.
\end{itemize}

\end{proof}\qed
\end{proposition}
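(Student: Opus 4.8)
The plan is to verify the two defining clauses of a Markov basis in turn; clause (1) is bookkeeping and essentially all of the effort goes into the connectivity clause (2). For clause (1) I would read off from the construction that each $\mathbf{m}_{k,i}$ is supported on the coordinates $0$, $i$, $k-i$, $k$, with value $-1$ at $0$ and at $k$ and value $+1$ at $i$ and at $k-i$ --- the sole exception being $k$ even with $i=k/2$, where $i$ and $k-i$ coincide and the common value is $+2$; in particular $\mathbf{m}_{k,i}\in\mathbb{Z}^{t+1}$. That $A_t^{(\pi)}\mathbf{m}_{k,i}=\mathbf{0}$ is then two one-line checks: pairing with the row $(0,1,\ldots,t)$ gives $i+(k-i)-k=0$ (and $2\cdot(k/2)-k=0$ in the exceptional case), and pairing with the constant row gives $-1+1+1-1=0$ (respectively $-1+2-1=0$).

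For clause (2) I would pass to the partition picture: a point $\mathbf{f}\in\mathcal{F}_{N,t}^{(\pi)}$ is a multiset of $N$ non-negative integer parts summing to $t$, and $\mathbf{m}_{k,i}$ replaces a part equal to $k$ and a part equal to $0$ by two parts $i$ and $k-i$, while $-\mathbf{m}_{k,i}$ merges a part $i$ and a part $k-i$ into a single part $k=i+(k-i)$ and frees one $0$-part. It then suffices to show every $\mathbf{f}$ can be driven to the canonical point $(t,0,\ldots,0)$, staying in $\mathbb{N}^{t+1}$ throughout. I would do this by repeatedly choosing two positive parts $a\le b$ and merging them: this is the move $-\mathbf{m}_{a+b,\,a}\in\mathcal{B}_t^{(\pi)}$ (legitimate because $2\le a+b\le t$ and $1\le a\le\lfloor(a+b)/2\rfloor$), it is applicable since the parts $a$ and $b$ are present --- with multiplicity at least $2$ when $a=b$ --- and the resulting multiset is again an honest non-negative partition. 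Each merge strictly lowers the number of positive parts, so after finitely many steps only the part $t$ is left. Concatenating the reduction of $\mathbf{f}$ with the reverse of the reduction of $\mathbf{f}'$ connects any two points of $\mathcal{F}_{N,t}^{(\pi)}$; by the remark following the definition this also gives connectedness of the induced graph.

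As a fallback I would instead argue by induction on $t$, exploiting that the moves of $\mathcal{B}_{t+1}^{(\pi)}$ with $k\le t$ are precisely those of $\mathcal{B}_t^{(\pi)}$ with a trailing zero, and that $\mathcal{F}_{N,t+1}^{(\pi)}$ splits into the points with $f_{t+1}=0$ (which reproduce a lower fiber and are handled by the inductive hypothesis) and the single point $(N-1,0,\ldots,0,1)$, which the moves with $k=t+1$ attach to the rest, with $t=1,2$ as trivial base cases. Either way, the one genuine obstacle is the same: checking that the chosen reduction only ever calls on moves meeting the index constraints $2\le k\le t$, $1\le i\le\lfloor k/2\rfloor$ that define $\mathcal{B}_t^{(\pi)}$, and that no intermediate configuration leaves the non-negative orthant, as clause (2) demands.
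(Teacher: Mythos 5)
Your verification of the kernel conditions matches the paper's, but your main connectivity argument is genuinely different and, in my view, tighter. The paper proceeds by induction on $t$: it splits $\mathcal{B}_t^{(\pi)}$ into the moves with $k<t$ (which are the moves of $\mathcal{B}_{t-1}^{(\pi)}$ padded with a trailing zero) and those with $k=t$, splits the fiber into the single point $(N-1,0,\ldots,0,1)$ plus the remaining points, and leaves the inductive step at the level of ``it is easy to verify.'' That recursive decomposition is what the paper then reuses to count the moves, which is the main thing the inductive route buys. Your primary argument instead exhibits a canonical element --- the partition with a single positive part $t$, i.e.\ the frequency vector $(N-1,0,\ldots,0,1)$ (note that your notation $(t,0,\ldots,0)$ is the list of parts, not the frequency vector the proposition is phrased in; worth saying explicitly to avoid confusion) --- and shows every point reaches it by repeatedly merging two positive parts $a\le b$ via $-\mathbf{m}_{a+b,a}$. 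The index checks $2\le a+b\le t$ and $1\le a\le\lfloor(a+b)/2\rfloor$ are exactly right, each merge strictly decreases the number of positive parts so the process terminates, and every intermediate state is an honest partition, which disposes of the non-negativity requirement in clause (2) more transparently than the paper does. Concatenating a reduction with a reversed reduction then connects any two fiber points. This is a complete and arguably cleaner proof of the same statement; your fallback is essentially the paper's argument.
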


In general \texttt{4ti2} will provide a larger number of moves than those obtained using Proposition \ref{prop:basis}. This is due to the fact that \texttt{4ti2} gives a basis for \emph{all} the fibers $\lbrace \mathbf{f} :A^{(\pi)}_t \mathbf{f}=\mathbf{b}, \mathbf{b}\geq \mathbf{0}\rbrace$, 
while we have built a Markov basis for the specific fiber $\mathcal{F}^{(\pi)}_{N,t}$, where $\mathbf{b}=\begin{pmatrix}
t\\
N
\end{pmatrix}$.

For $t=6$, \texttt{4ti2} gives 15 moves: the nine listed in equation \eqref{eq:b6} plus the six below
\begin{equation*} \label{eq:b6extra}
\begin{bmatrix*}[r]
0 &	0 &	0 &	0 &	0 &	0 \\
0 &	0 &	0 &	0 &	1 &	1 \\
0 &	0 &	1 &	1 &	-1 &	0 \\
0 &	1 &	-1 &	0 &	0 &	-1 \\
1 &	-1 &	0 &	-2 &	0 &	-1 \\
-2 &	-1 &	-1 &	0 &	-1 &	0 \\
1 &	1 &	1 &	1 &	1 &	1 \\
\end{bmatrix*}.
\end{equation*}
As one can easily check none of these moves is admissible. For example to use the first move $(0,0,0,0,1,-2,1)$ we need a vector $\mathbf{f}$ such that $\sum_{i=0}^6 i f_i \geq 10$, but such $\mathbf{f}$ does not belong to $\mathcal{F}^{(\pi)}_{N,6}$.

\begin{proposition}
The number $K$ of moves in $\mathcal{B}^{(\pi)}_t$ is equal to
\[
K=\begin{cases}
\frac{t^2}{4}\quad \text{if $t$ is even}\\
\frac{t^2-1}{4}\quad \text{if $t$ is odd}\\
\end{cases}.
\]
\begin{proof}
From Proposition \ref{prop:basis} it follows that the total number of moves in $\mathcal{B}^{(\pi)}_t$, for a generic $t$, is given by $\sum_{k=2}^t \lfloor k/2\rfloor$.

Thus if $2\leq k\leq t$ we need to compute the sum of the following sequence
\[
\begin{array}{c|ccccccccc}
k& &2		&3	&4	&5	&6&7&\ldots&t\\
\lfloor k/2\rfloor& &1	&1	&2	&2	&3	&3	&\ldots &\lfloor t/2\rfloor\\
\end{array}.
\]
If $t$ is odd then this sum is
\begin{align*}
\sum_{k=2}^t \lfloor k/2\rfloor&=2\cdot\sum_{k=1}^{(t-1)/2} k=\\
&=2\cdot\frac{1}{2}\cdot\left(\frac{t-1}{2}\cdot\left(\frac{t-1}{2}+1\right)\right)=\\
&=\frac{(t-1)(t+1)}{4}=\frac{t^2-1}{4}.
\end{align*}
If $t$ is even then
\begin{align*}
\sum_{k=2}^t \lfloor k/2\rfloor&=2\cdot\sum_{k=1}^{(t-2)/2} k+\frac{t}{2}=\\
&=2\cdot\frac{1}{2}\cdot\left(\frac{t-2}{2}\cdot\left(\frac{t-2}{2}+1\right)\right)+\frac{t}{2}=\\
&=\frac{t}{2}\left(\frac{t-2}{2}+1\right)=\frac{t^2}{4}.
\end{align*}
\end{proof}\qed
\end{proposition}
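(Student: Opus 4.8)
The plan is to reduce the claim to a purely combinatorial count of the index set appearing in the construction of Proposition \ref{prop:basis}. First I would note that that construction attaches to each pair $(k,i)$ with $2\le k\le t$ and $1\le i\le\lfloor k/2\rfloor$ exactly one move $\mathbf{m}_{k,i}$, so that $K=\sum_{k=2}^t\#\{i:1\le i\le\lfloor k/2\rfloor\}=\sum_{k=2}^t\lfloor k/2\rfloor$. To be careful, I would also check that distinct pairs give distinct moves, so that no cancellation in the count occurs: the value $k$ is recovered as the unique positive index at which $\mathbf{m}_{k,i}$ equals $-1$, and once $k$ is fixed the value of $i$ is recovered from the positions of the entries equal to $1$ (or from the single entry equal to $2$ in the case $i=k-i$). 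Hence the problem is exactly to evaluate $\sum_{k=2}^t\lfloor k/2\rfloor$.

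Next I would evaluate this sum by using that, as $k$ runs $2,3,4,5,\dots$, the quantity $\lfloor k/2\rfloor$ takes the values $1,1,2,2,3,3,\dots$, i.e. each value $j$ is attained exactly for the two consecutive indices $k=2j$ and $k=2j+1$. If $t$ is odd, the range $k=2,\dots,t$ has an even number of terms and splits exactly into such pairs, giving $\sum_{k=2}^t\lfloor k/2\rfloor=2\sum_{j=1}^{(t-1)/2}j=\tfrac{t-1}{2}\cdot\tfrac{t+1}{2}=\tfrac{t^2-1}{4}$. If $t$ is even, the pairing accounts for $k=2,\dots,t-1$ and leaves the single trailing term $\lfloor t/2\rfloor=t/2$, so $\sum_{k=2}^t\lfloor k/2\rfloor=2\sum_{j=1}^{(t-2)/2}j+\tfrac t2=\tfrac{t(t-2)}{4}+\tfrac t2=\tfrac{t^2}{4}$. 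Combining the two cases gives the displayed piecewise formula.

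There is no substantive obstacle in this argument: everything of real content is already contained in Proposition \ref{prop:basis}, and what remains is elementary. The only point requiring a little attention is the off-by-one bookkeeping, namely fixing the range of $k$, confirming that the pairs $(k,i)$ index the basis without repetition, and splitting the floor-sum correctly according to the parity of $t$; once these are handled cleanly the two closed-form evaluations are immediate.
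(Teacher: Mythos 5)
Your proposal is correct and follows essentially the same route as the paper: reduce $K$ to $\sum_{k=2}^t\lfloor k/2\rfloor$ via the index set of Proposition \ref{prop:basis} and evaluate the sum by pairing consecutive terms according to the parity of $t$. The only addition is your explicit check that distinct pairs $(k,i)$ yield distinct moves, a small but welcome point of rigor that the paper leaves implicit.
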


\section{Properties of Estimators}
\label{app:estimator}
In this appendix we show the proofs of the 2 properties of the estimators $\mathbb{I}_{(U(\mathbf{y})\leq u)}$ and $F_U(u\vert \pi)$ presented in Section \ref{sec:compare}.
First we prove the relation between the variances of the two estimators, then, thanks to Lemma \ref{lemma}, we prove a similar result for the mean absolute deviation.
\paragraph{\textbf{Proposition 2}}

$\variance_{p}\left[\mathbb{I}_{(U(\mathbf{y})\leq u)}\right] \geq \variance_{p_\pi} \left[F_U(u\vert \pi)\right]$.
\begin{proof}
From Proposition \ref{prop} both $\mathbb{I}_{(U(\mathbf{y})\leq u)}$ and $F_U(u\vert \pi)$ are unbiased estimator of the distribution of $U$ over the fiber $\fiber$, $F_U(u\vert \fiber)$.
Then it is enough to show that 
\[
\mathbb{E}_{p}\left[(\mathbb{I}_{(U(\mathbf{y})\leq u)})^2\right] \geq \mathbb{E}_{p_\pi}\left[(F_U(u\vert \pi)^2)\right]
\]
From  $(\mathbb{I}_{(U(\mathbf{y})\leq u)})^2=\mathbb{I}_{(U(\mathbf{y})\leq u)}$  we have
\begin{align*}
\mathbb{E}_{p}\left[(\mathbb{I}_{(U(\mathbf{y})\leq u)})^2\right]&=\mathbb{E}_{p}\left[\mathbb{I}_{(U(\mathbf{y})\leq u)}\right] \\
&=\sum_{\mathbf{y}\in\fiber} p(\mathbf{y})\mathbb{I}_{(U(\mathbf{y})\leq u)}\\
&=\sum_{\pi\subseteq\mathcal{F}_{N,t}} \sum_{\mathbf{y}\in \pi} p(\mathbf{y}) \mathbb{I}_{(U(\mathbf{y})\leq u)} \\
&=\sum_{\pi\subseteq\mathcal{F}_{N,t}} p(\mathbf{y}_\pi)\sum_{\mathbf{y}\in \pi}\mathbb{I}_{(U(\mathbf{y})\leq u)} \\
&=\sum_{\pi\subseteq\mathcal{F}_{N,t}} \#\pi p(\mathbf{y}_\pi)\sum_{\mathbf{y}\in \pi}\frac{\mathbb{I}_{(U(\mathbf{y})\leq u)}}{\#\pi} \\
&=\sum_{\pi\subseteq\mathcal{F}_{N,t}} p_\pi(\pi)\sum_{\mathbf{y}\in \pi}\frac{\mathbb{I}_{(U(\mathbf{y})\leq u)}}{\#\pi} \\
&\geq \sum_{\pi\subseteq\mathcal{F}_{N,t}} p_\pi(\pi)\left(\sum_{\mathbf{y}\in \pi}\frac{\mathbb{I}_{(U(\mathbf{y})\leq u)}}{\#\pi}\right)^2 \\
&=\mathbb{E}_{p_\pi}\left[(F_U(u\vert \pi)^2)\right]
\end{align*}
where $\mathbf{y}_\pi$ is any of the element of the orbit $\pi$. The $\geq$ sign comes from 
\begin{align*}
&0\leq \left(\sum_{\mathbf{y}\in \pi}\frac{\mathbb{I}_{(U(\mathbf{y})\leq u)}}{\#\pi}\right) \leq 1 \\
&\Rightarrow \left(\sum_{\mathbf{y}\in \pi}\frac{\mathbb{I}_{(U(\mathbf{y})\leq u)}}{\#\pi}\right)^2 \leq
\left(\sum_{\mathbf{y}\in \pi}\frac{\mathbb{I}_{(U(\mathbf{y})\leq u)}}{\#\pi}\right).
\end{align*}
\end{proof}\qed

To prove the result in Proposition \ref{prop3} for the mean absolute deviation (MAD), we need the following Lemma.
\begin{lemma}
\label{lemma}
Let $x,y\in\left[0,1\right]$, then the following facts hold:
\begin{enumerate}
\item $x-2xy+y\geq0$;
\item $\vert x-y\vert\leq x-2xy+y$.
\end{enumerate}
\begin{proof}
\begin{enumerate}
\item Having $x,y\in\left[0,1\right]$ implies $x\geq x^2$ and $y\geq y^2$. Thus
\[x-2xy+y\geq x^2-2xy+y^2=(x-y)^2\geq0.\]
\item We denote $M(x,y)=x-2xy+y\geq0$. Then $\vert x-y\vert\leq x-2xy+y$ is equivalent to
\[\begin{cases}
x-y\leq M(x,y)\\
x-y\geq -M(x,y)
\end{cases}.\]

Consider the first inequality:
\[
x-y\leq x-2xy+y\quad \Rightarrow \quad 2xy-2y\leq0.
\]
If $y=0$ this is clearly true, while if $y\neq0$ we have $x-1\leq0$ which is always true given that $x,y\in\left[0,1\right]$.

By a similar computation one can verify that the second inequality always holds true.

Hence we can conclude $\vert x-y\vert\leq x-2xy+y$ for every $x,y\in\left[0,1\right]$.
\end{enumerate}
\end{proof}\qed
\end{lemma}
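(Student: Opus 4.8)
The plan is to handle the two parts in turn, since the second follows readily once the first is in hand.

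For part (1) I would use the elementary fact that a number in $[0,1]$ dominates its own square: from $x\in[0,1]$ we get $x\ge x^2$, and likewise $y\ge y^2$. Summing these and subtracting $2xy$ yields
\[
x-2xy+y\ \ge\ x^2-2xy+y^2=(x-y)^2\ \ge\ 0,
\]
which is the desired inequality. This step is essentially forced, so I expect no trouble.

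For part (2) I would set $M(x,y)=x-2xy+y$, note that $M(x,y)\ge 0$ by part (1), and observe that $\vert x-y\vert\le M(x,y)$ is equivalent to the two inequalities $x-y\le M(x,y)$ and $y-x\le M(x,y)$. Each should reduce to something transparent after rearrangement: $x-y\le x-2xy+y$ becomes $0\le 2y(1-x)$, which is valid because $y\ge 0$ and $x\le 1$; and by the symmetric computation $y-x\le x-2xy+y$ becomes $0\le 2x(1-y)$, valid because $x\ge 0$ and $y\le 1$. Combining the two bounds gives $\vert x-y\vert\le M(x,y)$.

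The only real pitfall is bookkeeping: in the two sub-inequalities one must genuinely invoke the complementary endpoint constraints ($x\le 1$ in one case, $y\le 1$ in the other) rather than accidentally reusing the same hypothesis twice, so that the argument is truly symmetric. Beyond keeping that straight, the whole proof is routine algebra on the unit square, and I would not expect any substantive difficulty.
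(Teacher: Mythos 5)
Your proposal is correct and follows essentially the same route as the paper: part (1) via $x\ge x^2$, $y\ge y^2$ and completing the square, and part (2) by splitting $\vert x-y\vert\le M(x,y)$ into two one-sided inequalities that reduce to $2y(1-x)\ge 0$ and $2x(1-y)\ge 0$. Your factored form is in fact marginally cleaner than the paper's case split on $y=0$ versus $y\neq 0$, but the argument is the same.
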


\paragraph{\textbf{Proposition 3}}
$\mad_{p}\left[\mathbb{I}_{(U(\mathbf{y})\leq u)}\right]\geq\mad_{p_\pi}\left[F_U(u\vert\pi)\right]$.
\begin{proof}
\begin{align*}
&\mad_{p}\left[\mathbb{I}_{(U(\mathbf{y})\leq u)}\right]=\mathbb{E}\left[\vert\mathbb{I}_{(U(\mathbf{y})\leq u)}-F_U(u\vert\fiber) \vert\right]\\
&=\sum_{\mathbf{y}\in\fiber} p(\mathbf{y})\vert\mathbb{I}_{(U(\mathbf{y})\leq u)}-F_U(u\vert\fiber) \vert\\
&=\sum_{\pi\subseteq\mathcal{F}_{N,t}} \sum_{\mathbf{y}\in \pi} p(\mathbf{y}) \vert\mathbb{I}_{(U(\mathbf{y})\leq u)}-F_U(u\vert\fiber) \vert \\
&=\sum_{\pi\subseteq\mathcal{F}_{N,t}}p_{\pi}(\pi)\cdot \frac{1}{\#\pi }  \sum_{\mathbf{y}\in \pi} \vert\mathbb{I}_{(U(\mathbf{y})\leq u)}-F_U(u\vert\fiber) \vert.
\end{align*}
We divide the vectors $\mathbf{y}\in\pi$ into two classes $C_0$ and $C_1=\bar{C_0}$, such that $C_0=\left\lbrace\mathbf{y}\in \pi:\mathbb{I}_{(U(\mathbf{y})\leq u)}=0\right\rbrace$ and $C_1=\left\lbrace\mathbf{y}\in \pi:\mathbb{I}_{(U(\mathbf{y})\leq u)}=1\right\rbrace$, then 
\begin{align*}
&\mad_{p}\left[\mathbb{I}_{(U(\mathbf{y})\leq u)}\right]=\\
&=\sum_{\pi\subseteq\mathcal{F}_{N,t}}p_{\pi}(\pi)\cdot \frac{1}{\#\pi }  \left(\sum_{C_0} F_U(u\vert\fiber)+\sum_{C_1} \left(1-F_U(u\vert\fiber)\right)\right)\\
&=\sum_{\pi\subseteq\mathcal{F}_{N,t}}p_{\pi}(\pi)\cdot \frac{1}{\#\pi }  \left(\#C_0\cdot F_U(u\vert\fiber)+\#C_1\cdot \left(1-F_U(u\vert\fiber)\right)\right).
\end{align*}
By looking at the definition of $F_U(u\vert\pi)$ in equation \eqref{eq:dist_orbit} we observe that $\#C_1/\#\pi=F_U(u\vert\pi)$ and $\#C_0/\#\pi =1-\#C_1/\#\pi=1-F_U(u\vert\pi)$, thus
\begin{align*}
&\mad_{p}\left[\mathbb{I}_{(U(\mathbf{y})\leq u)}\right]=\\
&=\sum_{\pi\subseteq\mathcal{F}_{N,t}}p_{\pi}(\pi) \left(F_U(u\vert\fiber)-2F_U(u\vert\pi)F_U(u\vert\fiber)+F_U(u\vert\pi)\right).
\end{align*}
Now, by Lemma \ref{lemma}, it holds that $\vert F_U(u\vert\fiber)-F_U(u\vert\pi)\vert \leq M(\fiber,\pi)$, where $M(\fiber,\pi)=F_U(u\vert\fiber)-2F_U(u\vert\pi)F_U(u\vert\fiber)+F_U(u\vert\pi)$. Therefore
\begin{align*}
&\mad_{p}\left[\mathbb{I}_{(U(\mathbf{y})\leq u)}\right]=\\
&=\sum_{\pi\subseteq\mathcal{F}_{N,t}}p_{\pi}(\pi) \left(F_U(u\vert\fiber)-2F_U(u\vert\pi)F_U(u\vert\fiber)+F_U(u\vert\pi)\right)\\
&\geq \sum_{\pi\subseteq\mathcal{F}_{N,t}}p_{\pi}(\pi) \vert F_U(u\vert\fiber)-F_U(u\vert\pi)\vert\\
&=\mad_{p_\pi}\left[F_U(u\vert\pi)\right].
\end{align*}
\end{proof}\qed

\nocite{*}
\bibliographystyle{spbasic}      

\end{document}